\newcommand{\cor}{{\bf corr}}
\newcommand{\REM}[1]{}
\newcommand{\RR}{\mathcal{R}}
\newcommand{\HH}{\mathcal{H}}
\newcommand{\D}{\mathcal{D}}
\newcommand{\numcopies}{\sum_{h\in\HH} q^{-}(h)}
\newenvironment{appendix-lemma}[1]{\vspace{0.1in}\noindent{\bf Lemma~#1~} \em }{\vspace{0.1in}}
\newenvironment{appendix-theorem}[1]{\vspace{0.1in}\noindent{\bf Theorem~#1~} \em }{\vspace{0.1in}}
\newtheorem{invariant}{Observation}
\newtheorem{observation}{Observation}
\title{Popular Matchings with Lower Quotas}
\author[1]{Meghana Nasre}
\author[2] {Prajakta Nimbhorkar}
\affil[1]{Indian Institute of Technology, Madras, India \tt{(meghana@cse.iitm.ac.in)}}
\affil[2]{Chennai Mathematical Institute, India \tt{(prajakta@cmi.ac.in)}}
\begin{document}
\maketitle
\begin{abstract}
We consider the well-studied Hospital Residents (HR) problem in the presence
of lower quotas (LQ). The input instance consists of a bipartite graph $G = (\RR \cup \HH, E)$
where $\RR$ and $\HH$ denote sets of residents and hospitals respectively. 
Every vertex has a preference list that imposes a strict ordering on its neighbors.
In addition, each hospital $h$ has an associated
upper-quota $q^+(h)$ and lower-quota $q^-(h)$. 
A matching $M$ in $G$ is an assignment of residents to hospitals, and $M$ is said to be feasible
if every resident is assigned to at most one hospital and a hospital $h$ is assigned at least $q^-(h)$ 
and at most $q^+(h)$ residents.

Stability is a de-facto notion of optimality in a model where both sets of vertices have preferences.
A matching is stable if no unassigned pair has an incentive to deviate from it.
It is well-known that an instance of the HRLQ problem need not admit a feasible stable matching.
In this paper, we consider the notion of popularity for the HRLQ problem. A matching $M$ is popular
if no other matching $M'$ gets more votes than $M$ when vertices vote between $M$ and $M'$.
When there are no lower quotas, there always exists a stable matching and it is known that every
stable matching is popular. 

We show that in an HRLQ instance, although a feasible stable matching need
not exist, there is always a matching that is popular in the set of feasible matchings. 
We give an efficient algorithm to compute a maximum
cardinality matching that is popular amongst all the feasible matchings in an HRLQ instance.
\end{abstract}

\section {Introduction}
In this paper we consider the Hospital Residents problem in the presence of Lower Quotas (HRLQ). The
input to our problem is a bipartite graph $G = (\RR \cup \HH, E)$ where $\RR$ denotes the
set of residents, and $\HH$ denotes the set of hospitals.
Every resident
as well as hospital has a non-empty preference ordering over a subset of elements of the other set.
Every hospital $h \in \HH$
has a non-zero upper-quota $q^+(h)$ denoting the maximum number of residents
that can be assigned to $h$. In addition, every hospital $h$ also has a non-negative lower-quota $q^-(h)$ denoting
the minimum number of residents that have to be assigned to $h$. The goal is to assign residents to hospitals such that the upper and lower quotas of all the hospitals
are respected (that is, it is feasible) as well as the assignment is {\it optimal} with respect to the preferences of the participants. 
\begin{definition}\label{def:feasible}
A feasible matching $M$ in $G=(\RR\cup\HH,E)$ is a subset of $E$ such that $|M(r)|\leq 1$ for
each $r\in \RR$ and $q^-(h)\leq |M(h)|\leq q^+(h)$ for each $h\in \HH$, where $M(v)$ is the 
set of neighbors of $v$ in $M$.
\end{definition} 
{\it Stability} is a de-facto notion of optimality in settings where both sides have preferences. 
A matching $M$ (not necessarily feasible) is said to be stable if there is no {\it blocking pair} with respect to $M$. A resident-hospital pair $(r,h)$ blocks $M$ if $r$ is unmatched in $M$ or prefers $h$ over $M(r)$,
and either $|M(h)|<q^+(h)$ or $h$ prefers $r$ over at least one resident in $M(h)$.

There are simple instances of the HRLQ problem where there is no feasible matching that is stable. We give an example here: Let $\RR=\{r\}, \HH=\{h_1,h_2\}$, $q^+(h_1)=q^+(h_2)=1$, $q^-(h_1)=0$, and $q^-(h_2)=1$. Let preference list of $r$ be $\langle h_1,h_2\rangle$. That is, $r$ prefers $h_1$ over $h_2$.
The only stable matching here is $M_1 = \{(r, h_1)\}$ which is not feasible as $|M_1(h_2)|<q^-(h_2)$. On the other hand, the only feasible matching $M_2=\{(r,h_2)\}$ is not
stable as $(r,h_1)$ is a blocking pair with respect to $M_2$.
This raises the question, given an HRLQ instance $G$, does $G$ admit a feasible stable matching? 
This can be answered by constructing an HR instance $G^+$ by disregarding the lower quotas of all hospitals in $G$.
It is well-known that the Gale-Shapley algorithm~\cite{GS62} computes a stable matching $M$ in $G^+$. Furthermore,
from the ``Rural Hospitals Theorem" it is known that, in every stable matching of $G^+$, 
each hospital is matched to the same capacity~\cite{GS85, Roth86}. Thus $G$ admits a stable feasible matching if and only if $M$ is 
feasible for $G$.
\REM{
It follows from the 
``Rural Hospitals Theorem'' that either all the stable matchings in $G$ are feasible 
or all are infeasible. This is because the theorem states that each stable matching matches the same set of residents
and each hospital is matched to the same capacity in every stable matching \cite{GS85, Roth86}.
This makes it easy to check whether a given HRLQ instance admits a feasible stable matching. It can be done in polynomial time by simply disregarding the lower quotas of all the hospitals and computing a stable
matching in the resulting HR instance $G^+$ by Gale-Shapley algorithm \cite{GS62}. 
}

The HRLQ problem is motivated by practical scenarios like assigning medical interns (residents) to hospitals.
While matching residents to hospitals,
rural
hospitals often face the problem of being understaffed with residents, for example the  National Resident Matching Program in the US \cite{BrandlK16, Roth84, Roth86}.
In such real-world applications declaring that there is no feasible stable matching is simply not a solution. On the other hand,
any feasible matching that disregards the preference lists completely is socially unacceptable.
We address this issue by relaxing the requirement of stability by an alternative notion of namely {\it popularity}.
Our output matching $M$ has two desirable criteria -- firstly, it is a feasible matching in the instance, assuming one such exists, and hence no hospital
remains understaffed. Secondly, the matching respects preferences of the participants, in particular,  no majority of participants wish to deviate to another 
feasible matching in the instance.

\REM{
From a practical perspective, the HRLQ problem arises naturally. While matching residents to hospitals,
it is necessary for smooth functioning of a hospital that its minimum requirement on the number of residents is satisfied. For instance, in the National Residents Matching Program in the USA, rural
hospitals often face the problem of being understaffed with residents \cite{BK16, Roth84, Roth86}.
In such a setting, the option of finding a maximum cardinality matching disregarding the preference lists is socially unacceptable. Our algorithm achieves two goals here. Firstly, it outputs a matching that respects
upper as well as lower quotas of all the hospitals as long as such a matching exists, and hence no
hospital remains understaffed. Secondly, the notion of popularity used by our algorithm is a relaxation
of stability. Hence maximum cardinality popular matchings are typically significantly larger than stable matchings and hence
leave a smaller number of residents unmatched (see e.g. \cite{??}).
}
\noindent {\bf Our contribution: }
We consider the notion of {\it popularity} for the HRLQ problem. Popularity is a relaxation of stability
and can be interpreted as {\it overall stability}. We define it formally in Section~\ref{sec:pop-def}. In this work, we present an efficient algorithm for the following two problems in an HRLQ instance.
\begin{enumerate}
\item Computing a maximum cardinality matching popular in the set of feasible matchings. We give an $O(|\RR|(|E|+|\HH|))$ time algorithm
for this problem.
\item Computing a popular matching amongst maximum cardinality feasible matchings. We give an $O(|\RR|^2(|E|+|\HH|))$ time algorithm for this problem.
\end{enumerate}
Our algorithms are based on ideas introduced in earlier works on stable marriage (SM) and HR problems\cite{Kavitha14,BrandlK16,NasreR16}. However, in SM and HR problem, a popular matching
is guaranteed to exist because a stable matching always exists and it is also popular. 
On the other hand, in the HRLQ setting even a stable matching may not exist. Yet, we prove that
a feasible matching that is popular amongst all feasible matchings always exists and is efficiently computable. We believe that this is not only  surprising
but also a useful result  in practical scenarios. Moreover, our notion of popularity subsumes the notions 
proposed in \cite{BrandlK16} and \cite{NasreR16} and is more general than both. In \cite{BrandlK16}, popularity is proved using linear programming, but our proofs for popularity are combinatorial.
 
\noindent {\bf Overview of the algorithm:} Our algorithms are reductions, that is, given an HRLQ instance $G$,
both our algorithms construct instances $G'$ and $G''$ of the HR problem such that there is a natural way to map a stable matching in $G'$ (respectively, $G''$) to a feasible matching
in $G$. Moreover, any stable matching in $G'$ ($G''$) gets mapped to a maximum cardinality matching that is popular amongst all the feasible matchings in $G$ (respectively, a matching that is popular amongst all maximum cardinality matchings in $G$). 

\noindent {\bf Organization of the paper:} We define the notion of popularity in Section~\ref{sec:pop-def}.
The reduction for computing a maximum cardinality popular matching amongst feasible matchings is given in Section~\ref{sec:red} and its correctness is proved in Section~\ref{sec:pop}. Finally, Section~\ref{sec:popmax} describes the reduction for computing a matching that is popular amongst maximum 
cardinality feasible matchings and its correctness.

\noindent {\bf Related work:} 
The notion of popularity was first proposed by G\"ardenfors \cite{G75}
in the stable-marriage (SM) setting, where each vertex has capacity $1$, and have been well-studied
since then \cite{BIM10, HK13, Kavitha14, HYKY15, CsehK16, Kavitha16}.
A linear-time algorithm
to compute a maximum cardinality popular matching in an HR instance is given in ~\cite{BrandlK16} and \cite{NasreR16} with different notions of popularity. Furthermore,
for the SM and HR problem, it is known that a matching that is {\it popular amongst the maximum cardinality
matchings} exists and can be computed in $O(mn)$ time \cite{Kavitha14, NasreR16}.
The  reductions in our paper are inspired by the work of \cite{BrandlK16,CsehK16,Kavitha14,NasreR16}.
In all these earlier works, the main idea is to execute Gale and Shapley algorithm
on the HR instance and then allow unmatched residents to propose with {\it increased priority}~\cite{Kavitha14} certain number of times.
As mentioned in~\cite{Kavitha14}, this idea was first proposed in ~\cite{Kiraly11}.
The HRLQ problem has been recently considered in \cite{Biro10} and \cite{Hamada16} in different 
settings.


\REM{
\subsection {Example}\textcolor{blue}{Is the example appropriate here? I find it somewhat out of
context here. A better place could be in preliminaries where we define voting in HR.}

We first present a simple example which shows that the technique of cloning does not work even 
in the case of HR problem. The technique of cloning is as follows. To compute a popular matching in
the HR instance $G$, convert the instance $G$ into a stable marriage instance $G'$ {\it cloning} every hospital.
That is, hospital $h \in G$ has $q^+(h)$ many copies in $G'$. Now simply compute a popular matching in $G'$ and
returned a matching obtained by mapping back the matching to the original instance. Such a technique relies
on two claims.
\begin{itemize}
\item Let $M$ and $N$ be two matchings in $G$, then $M$ and $N$ can be transformed to two matchings $M'$ and $N'$ in $G'$
such that the popularity is preserved in this transformation. This can be proved by a careful transformation.
\item Executing a 2-level algorithm in $G$ directly and transforming the output matching is equivalent to executing a 2-level algorithm on the stable marriage instance $G'$ (from \cite{Kavitha14}).
\end{itemize}

\input{ex1}
Consider the simple HR instance given in the example above. The instance admits a unique stable matching $M_s = \{(r_1, h_1), (r_2, h_1)\}$
and a unique maximum cardinality popular matching $M_p = \{(r_1, h_2), (r_2, h_1), (r_3, h_1)\}$. Now if we transform $M_p$
to get the matching $map(M_p)$, we get the matching $M_p' = \{(r_1, h_{21}), (r_2, h_{11}), (r_3, h_{12})\}$.
On the other hand, executing the algorithm of \cite{Kavitha14} directly on the cloned instance $G'$ 
would output a matching $M' = \{(r_1, h_{21}), (r_2, h_{12}), (r_3, h_{11})\}$. This
example shows that cloning does not directly give us a proof that executing the two level algorithm on
$G$ outputs a popular matching.
}

\section{Notion of Popularity}\label{sec:pop-def}
The notion of popularity uses votes from vertices to compare two matchings.
For $r \in \RR$, and any matching $M$ in $G$, if $r$ is unmatched in $M$ then,
$M(r) = \bot$.
A vertex prefers any of its neighbours over $\bot$.
For a vertex $u \in \RR \cup \HH$, let $x, y \in N(u) \cup \{\bot\}$,
where $N(u)$ denotes the neighbours of $u$ in $G$. We define $vote_u(x,y)=1$ if $u$ prefers $x$ 
over $y$, $-1$ if $u$ prefers $y$ over $x$ and $0$ if $x=y$.
Given two  matchings $M_1$ and $M_2$ in the instance, 
for a resident $r \in \RR$, we define $vote_r(M_1, M_2)  = vote_r(M_1(r), M_2(r))$.

\noindent {\bf Voting for a hospital:} A hospital $h$ is assigned $q^+(h)$-many
votes to compare two matchings  $M_1$ and $M_2$; this can be viewed as one vote
per position of the hospital. If a position is not filled in a matching, we put a $\bot$ there, so that
$|M_1(h)|=|M_2(h)|=q^+(h)$.
In our voting scheme, the hospital $h$ is indifferent between $M_1$ and $M_2$ as far as its
$|M_1(h)\cap M_2(h)|$ positions are concerned.

To compare between the two sets of residents $M_1(h) \setminus M_2(h)$ and $M_2(h) \setminus M_1(h)$, a hospital can decide any pairing of the elements of these two sets. We denote this
correspondence by 
${\bf corr_h}$. Under this correspondence, for a resident $r\in M_1(h)\setminus M_2(h)$, $\textrm{{\bf corr}}_h(r,M_1,M_2)$ is the resident in $M_2(h)\setminus M_1(h)$ corresponding to $r$. We define

\begin{eqnarray*}
vote_h(M_1, M_2) & = &\sum_{r\in M_1(h)\setminus M_2(h)} vote_h(r,\textrm{{\bf corr}}_h(r, M_1,M_2))
\end{eqnarray*} 
A hospital $h$ prefers $M_1$ over $M_2$ if $vote_h(M_1, M_2) > 0$.
We can now define popularity.
\begin{definition}
\label{def:pop}
A matching $M_1$ is more popular than $M_2$ (denoted as $M_1 \succ M_2$) if 
$\sum_{v\in \RR\cup \HH}vote_v(M_1,M_2)>\sum_{v\in \RR\cup \HH}vote_v(M_2,M_1)$.
A matching $M$ is popular if there is no matching $M'$ such that $M' \succ M$.
\end{definition}

There are several ways for a hospital to define the ${\bf corr}$ function. For example, 
a hospital $h$ may decide to order and compare the two sets in the decreasing order of preferences (as in \cite{NasreR16}
or in the {\it most adversarial order} (as in \cite{BrandlK16}).
That is, the order due to which $h$ gives the least votes to $M_1$ when comparing it with $M_2$.
We believe that our definition offers flexibility to hospitals to compare residents in $M_1(h) \setminus M_2(h)$ 
and $M_2(h) \setminus M_1(h)$ according to their custom designed criteria. 

\noindent {\bf Decomposing $M \oplus M'$:} 
In the one-to-one setting, where $M\oplus M'$ for any two matchings $M$ and $M'$ is a collection
of vertex-disjoint paths and cycles. Our setting is many-to-one and hence $M\oplus M'$ has a more
complex structure. 
Here, we recall a simple algorithm to decompose edges of $M \oplus M'$ into (possibly non-simple) alternating paths and cycles from \cite{NasreR16}. 
Consider the graph  $\tilde{G} = (\RR \cup \HH, M \oplus M')$, for any two feasible matchings of the HRLQ instance.
We note that the degree of every resident in $\tilde{G}$ is at most $2$ and
the degree of every hospital in $\tilde{G}$ is at most $2 \cdot q^+(h)$.
Consider any connected component $\mathcal{C}$ of $\tilde{G}$
and let $e\in M$ be any edge in $\mathcal{C}$. We 
show how to construct a unique maximal $M$-alternating path or cycle $\rho$ containing
$e$: Start with $\rho=\langle e \rangle$. Use the following inductive procedure.
\begin{enumerate}
\item Let $r \in \RR$ be one end-point of $\rho$, and let $(r, M(r)) \in \rho$. 
We grow $\rho$ by adding the edge $(r, M'(r))$. 
Similarly if $(r,M'(r))\in \rho$, add $(r,M(r))$ to $\rho$. 

\item Let $h \in \HH$ be an end-point of $\rho$, and let the last edge $(r, h)$ on $\rho$ be in $M \setminus M'$.
We extend $\rho$ by adding $\cor_h(r,M,M')$ if is not equal to $\bot$.
A similar step is performed if the last edge on $\rho$ is $(r, h) \in M' \setminus M$.
\item We stop the procedure when we complete a cycle (ensuring that the two adjacent residents of a hospital are ${\bf corr}$ for each other according to the hospital), or the path can no longer be extended.
Otherwise we go to Step~1 or Step~2 as applicable and repeat.
\end{enumerate}

\noindent {\bf Labels on edges:} While comparing a matching $M_1$ with another matching $M_2$, the voting scheme induces a label on edges of $M_2$ with respect to $M_1$. Let $(r,h)\in M_2$. The label of $(r,h)$ is 
$(a,b)$ where 
$a=vote_r(M_1(r),M_2(r))$ and $b=vote_h(\cor_h(r,M_2,M_1),r)$. Thus $a,b\in \{-1,1\}$.

\section{Reduction to HR problem }\label{sec:red}
In this section we present our reduction from an HRLQ instance $G = (\RR \cup \HH, E)$ to an HR instance $G' = (\RR' \cup \HH', E')$.
To compute a largest size feasible matching that is popular amongst all feasible matchings, we compute a stable matching
$M'$ in $G'$. We  show that there is a natural map from any stable matching $M'$ in $G'$ to a
feasible matching $M$ in $G$.

Before we describe the reduction in detail, we provide some intuition. Our reduction simulates the following algorithm:
Execute the hospital-proposing Gale-Shapley algorithm on $G$ by disregarding lower quotas of all hospitals. Let $M_0$ be a matching obtained. If some hospitals are {\it under-subscribed}\footnote{We say that a hospital is under-subscribed in a matching $M$ if $|M(h)|<q^+(h)$ and is deficient if $|M(h)|<q^-(h)$} in $M_0$,
they apply with increased priority to residents (in order of preference) and a new matching $M_1$ is obtained. If there are {\it deficient hospitals} in $M_1$, they again apply with an even higher priority. This process is repeated until there is no deficient hospital. This is achieved by reducing $G$ to an HR instance $G'$
described below:


We first describe the vertices in $G'$.

\noindent {\bf \underline {The set $\HH'$:}} 
For each hospital $h \in \HH$ we have $\ell$ copies $h^0, \ldots, h^{\ell-1}$  of $h$ in $\HH'$.  
Here $\ell =  2+ \numcopies$. We need to define the capacities\footnote{We use the term capacity for the hospitals in an HR instance whereas the term quota for hospitals in an HRLQ instance.
} of all hospitals $h \in \HH'$ (recall $G'$ is an HR instance, so we do not have lower quotas for $h \in \HH'$).
The hospitals in $\HH'$ and their capacities are as described below:
\begin{eqnarray*}
\HH' & = & \{h^0,\ldots, h^{\ell-1}\mid h\in \HH\} \\
\textrm{Capacities of $h \in \HH'$: }q^+(h^s) & = & q^+(h), \quad s\in \{0,1\}\\
q^+(h^s) & = & q^-(h), \quad s\in \{2,\ldots,\ell-1\}
\end{eqnarray*}
We call hospital $h^s \in \HH'$ a level-$s$ copy of $h$.
Note that if $h \in \HH$ has zero lower-quota,  then  $h^1,\ldots, h^{\ell -1}$  have zero
capacity in $\HH'$.
For a hospital $h \in \HH$, we denote by $q_h$ the sum of the capacities of all level copies of $h$ in $G'$. 
The following observation is immediate.
\begin{observation}
\label{inv:sum-cap-H}
For a hospital $h \in \HH$, the sum of capacities of all level copies of $h$ in $G'$ is $q_h = 2 \cdot q^+(h) + (\ell - 2) \cdot q^-(h)$.
\end{observation}

\vspace{0.1in}

\noindent {\bf \underline{The set $\RR'$:}} The set of residents $\RR'$ consists of the set $\RR$ along
with a set of dummy residents $\mathcal{D}_h$ corresponding to every hospital $h \in \HH$.
The set $\RR'$ and $\mathcal{D}_h$ are as defined below:
\begin{eqnarray*}
\RR' & = & \RR\cup \left(\bigcup_{h\in \HH} \mathcal{D}_h\right)\textrm{ where }
 \ \ \ \ \ \ \ \ \ \ \ \quad \mathcal{D}_h  =  \bigcup_{s\in\{0,\ldots,\ell-2\}}\mathcal{D}^s_h \ \ \  \ \ \ \ \ \forall h \in \HH\\
\vspace{0.1in}
\textrm{Here }\mathcal{D}^s_h& = & \{d^s_{h,1},\ldots,d^s_{h,q^+(h)}\}, \quad s\in\{0,1\}\\
\textrm{and }\mathcal{D}^s_h&=&\{d^s_{h,1},\ldots,d^s_{h,q^-(h)}\},\quad {s\in \{2,\ldots,\ell-2\}}
\end{eqnarray*}
We refer to $\mathcal{D}_h$ as {\it dummy residents corresponding to $h$} and $\mathcal{D}^s_h$ as {\it level-$s$
dummy residents corresponding to $h$}.
For $h \in \HH$, if $q^-(h) = 0$, then $\mathcal{D}^s_h=\emptyset$ for each $s\in\{2,\ldots,\ell-1\}$.

\noindent The following observation captures the number of dummy residents corresponding to every hospital $h \in \HH$.
\begin{observation}
\label{inv:sum-dummy-H}
For a hospital $h \in \HH$, the total number of dummy residents corresponding $h$ in $\RR'$ is $|\D_h| = 2 \cdot q^+(h) + (\ell - 3) \cdot q^-(h)$.
\end{observation}

\vspace{0.1in}
\noindent {\bf \underline{Preference lists:}} We denote by $\langle list_r\rangle$ and $\langle list_h \rangle$
the preference lists of $r$ and $h$ in $G$ respectively. Furthermore, $\langle \mathcal{D}^s_h\rangle$ denotes the strict list consisting of
elements of $\mathcal{D}^s_h$ in increasing order of indices. Finally, $\circ$ denotes the concatenation of two lists.  We now describe the preferences of hospitals and residents in $G'$.

\vspace{0.1in}
\noindent {\em Hospitals' preference lists:} Consider a hospital $h^s \in \HH'$ for $s \in \{1, \ldots, \ell-2\}$ and let $q$ denote the capacity of $h^s$.
The preference list of $h^s$ is of the form: $q$-dummy residents of level-$(s-1)$, followed by preference list of $h$ in $G$, followed
by $q$ dummy residents  of level-$s$. For $h^0$, the preference list is the preference list of $h$ in $G$ followed by capacity many dummy residents of level-0.
Finally, for $h^{\ell-1}$, there are dummy residents of level-$(\ell-2)$ followed by preference list of $h$ in $G$.
For $h^s\in \HH'$,
\[
\begin{array}{lcl}
s=0& :& \langle list_h\rangle \circ \langle \mathcal{D}^0_h \rangle\\
s=1 &:& \langle \mathcal{D}^{0}_h\rangle \circ \langle list_h\rangle \circ \langle \mathcal{D}^1_h\rangle \\
s=2 &:&  \langle d^1_{h,k},\ldots, d^1_{h,q^+(h)}\rangle \circ \langle list_h\rangle \circ \langle \mathcal{D}^2_h\rangle, \qquad k=q^+(h)-q^-(h)+1\\
s\in \{3,4,\ldots, \ell-2\} &:& \langle \mathcal{D}^{s-1}_h\rangle \circ \langle list_h\rangle \circ \langle \mathcal{D}^s_h\rangle\\
s=\ell-1 &:& \langle \mathcal{D}^{(\ell-2)}_h\rangle\circ\langle list_h\rangle
\end{array}
\]
\noindent {\em Residents' preference lists:}
\[
\begin{array}{lcl}
\textrm{For }r\in \RR &:& \langle list_r \rangle^{\ell-1} \circ \langle list_r\rangle^{\ell-2}\circ\ldots\circ\langle list_r\rangle^0\\
\textrm{For }h\in \HH, \ \ \ d^s_{h,i}\in \mathcal{D}_h &:& \\
s = 0 & : & \langle h^0, h^1 \rangle \\
s = 1, \ \  i\in \{1,\ldots,q^+(h)-q^-(h)\} &: & \langle h^1 \rangle\\
s = 1, \ \ i\in \{q^+(h)-q^-(h)+1,\ldots,q^+(h)\} & : & \langle h^1, h^2 \rangle\\
s \in \{2, \ldots, \ell -2 \} &:& \langle h^s, h^{s+1} \rangle
\end{array}
\]

\subsection{Properties of the stable matching $M'$ in $G'$}
With respect to a stable matching $M'$ in $G'$ we introduce the following definitions. \begin{definition}
{\bf Level-$s$ resident:} A non-dummy resident $r \in \RR'$ is said to be at level-$s$ in $M'$ if
$r$ is matched to a level-$s$ hospital in $M'$. Let $\RR'_s$ denote the set of level-$s$ residents. 
\end{definition}

\begin{definition}
{\bf Active hospital:}
A hospital $h^s$ is said to be {\it active} in $M'$ if $M'(h^s)$
contains at least one non-dummy resident. Otherwise, (when all positions of $h^s$ are matched to dummy residents), $h^s$ is said to be {\it inactive}.
\end{definition}
In the following lemma, we state some invariants for any stable matching $M'$ in $G'$. These invariants allow us to
define a natural map from $M'$ to a matching $M$ in $G$, and to show that $M$ is feasible as well as popular among feasible matchings.

\begin{lemma}\label{lem:SM-invariants}
\begin{enumerate}
The following hold for any stable matching $M'$ in $G'$:
\item\label{inv:one} For any $h \in \HH$, $M'$ matches at most $q^+(h)$ non-dummy residents across all its 
level copies in $G'$.
\item \label{inv:two}The matching $M'$ in $G'$ leaves only the level-$(\ell -1)$ copy of any hospital (if it exists) under-subscribed. 
\item Let $h^s \in \HH'$ be active in $M'$.  Then,
\begin{enumerate}
\item\label{inv:four-one} At least one position of $h^{s-1}$ is matched in $M'$ to a dummy resident at level-($s-1$).
\item\label{inv:four-two} For $0 \le j \le s-2$, $h^j$ is inactive in $M'$ and all positions of $h^{j}$ are matched to dummy residents of level-($j$).
\item\label{inv:four-three} For $s+2 \le j \le \ell-1$, $h^j$ is inactive in $M'$ and all  positions of $h^{j}$ are matched to  dummy residents of level-($j-1$).
\end{enumerate} 
\item\label{inv:three} For any $h \in \HH$, at most two consecutive level copies $h^s$ and $h^{s+1}$ are active in $M'$.
\item\label{inv:steep} A level-$s$ resident $r$ in $M'$ does not have any hospital $h$ in its preference list which is active at level-$(s+2)$ or more in $M'$.
\end{enumerate}
\end{lemma}

\begin{proof}
\begin{itemize}
\item {\em Proof of \ref{inv:one}}: Consider the set of dummy residents corresponding to a hospital 
$h\in \HH$ i.e. $\bigcup_{s=0}^{\ell-2}\mathcal{D}^s_h$. With the exception of $h^2$, for any $h^s$, $\D^{(s-1)}_h$ are
the most preferred $q^+(h^s)$ dummy residents of $h^s$. Thus these dummy residents can never
remain unmatched in $M'$. The only dummy residents that are not the first choice of any hospital and hence can remain unmatched are the subset of $\D^1_h$ consisting of the first $q^+(h)-q^-(h)$ dummy residents from $\D^1_h$. This is because, by construction of $G'$, only the last $q^-(h)$ dummy residents from $\D^1_h$
are present in the preference list of $h^2$ as its top $q^+(h^2)$ top-choices. 

Thus the total number of dummy residents for $h$ is given by $|\D_h|=2\cdot q^+(h)+(\ell-3)\cdot q^-(h)$.
Total capacity of all the copies of $h$ in $G'$ is $q_h=2\cdot q^+(h)+(\ell-2)\cdot q^-(h)$. Number of dummy residents in $\D_h$ that can remain unmatched in any stable matching $M'$ in $G'$ is at most $q^+(h)-q^-(h)$.
Thus the number of true residents matched to $h$ in $M'$ is at most $q_h-|\D_h|+q^+(h)-q^-(h)$
which is $q^+(h)$.

\item {\em Proof of \ref{inv:two}:} Consider a hospital $h\in\HH$. For each copy $h^s$ of $h$ in $\HH'$, 
where $s<\ell-1$, the dummy residents of level $s$ have $h^s$ as their first choice. Further, their
number is same as $q^+(h^s)$. Thus $h^s$ can not remain undersubscribed in any stable matching $M'$ of $G'$, otherwise these dummy residents will form a blocking pair with $h^s$.

\item {\em Proof of \ref{inv:four-one}:} For the sake of contradiction, assume that $h^{s-1}$ is not 
matched to any level-$(s-1)$ dummy resident and still $h^s$ is matched to a non-dummy resident.
As there are exactly $q^+(h^s)$ many level-$(s-1)$ dummy residents in the preference list of $h^s$,
and each level-$(s-1)$ dummy resident has only $h^{s-1}$ and $h^s$ in its preference list,
this means that there is a level-$(s-1)$ dummy resident $d$ unmatched in $M'$. 
But $h^s$ prefers any level-$(s-1)$ dummy resident over any non-dummy resident. Thus $(d,h^s)$ forms
a blocking pair with respect to $M'$, contradicting the stability of $M'$.

{\em Proof of \ref{inv:four-two}:} If $h^s$ is active and $h^j$ is matched to a non-dummy resident $r$ for some $0\leq j\leq s-2$, then 
$(r,h^{(s-1)})$ is a blocking pair with respect to M'. This is because, as proved above, $h^{(s-1)}$
must be matched to at least one resident in $\D^{(s-1)}_h$, and $h^{(s-1)}$ prefers any non-dummy
resident over any dummy resident in $\D^{(s-1)}_h$.

{\em Proof of \ref{inv:four-three}:} If $h^s$ is active then $h^j$ can not be active for $s+2\leq j\leq \ell-1$ else $h^{(s+1)}$ must be
matched to a resident from $D^{(s+1)}_h$ as proved above, and then each non-dummy resident $r$
in $M'(h^s)$ forms a blocking pair with $h^j$ contradicting the stability of $M'$. But if $h^s$ is active,
then $h^j$ can not be matched to a dummy resident from $\D^j_h$ either, otherwise a resident in
$M'(h^s)$ forms a blocking pair with $h^j$. The later is true because any resident in $list_h$ prefers
$h^j$ over $h^s$ for $j>s$ and $h^j$ prefers any resident in $list_h$ to any dummy resident in $\D^j_h$.
Hence $h^j$ must be matched to only dummy residents in $\D^{(j-1)}_h$.  
\item {\em Proof of \ref{inv:three}:} Assume the contrary. Thus let $h$ be a hospital such that there are two levels $i$ and $j$, $j<i-1$,
where $h^i$ and $h^j$ are matched to one or more non-dummy residents. Further, assume that $h^i$ is matched to $r_i$ and $h^j$ be matched to $r_j$. Then, by Invariant $4$ above, $h^{i-1}$ must be
matched to at least one $(i-1)$-level dummy resident. But, by the structure of preference lists,
$h^{i-1}$ prefers a non-dummy resident, and hence $r_j$, over any $(i-1)$-level dummy resident.
Also, $r^j$ prefers $h^{i-1}$ over $h^j$ since $j<i-1$. Thus $(r_j, h^{i-1})$ forms a blocking pair
in $G'$ with respect to $M'$, contradicting the stability of $M'$.
\item {\em Proof of \ref{inv:steep}:} Let there be an edge $(r,h^t)$ in $G'$ such that $r$ is a level-$s$ 
resident and $t\geq s+2$ and $h^t$ is active in $M'$. Then, by Invariant \ref{inv:four-one}, $h^{s+1}$
has at least one level-$(s+1)$ dummy resident in $M'(h^{s+1})$. As $r$ has edge to $h^t$, $r$ also
has an edge to $h^{s+1}$ by construction of $G'$. Also, again by construction of $G'$, $r$ prefers
any level-$(s+1)$ hospital over any level-$s$ hospital and $h^{s+1}$ prefers any non-dummy 
resident in its preference list over any level-$(s+1)$ dummy resident. Thus $(r,h^{s+1})$ forms a 
blocking pair with respect to $M'$ in $G'$, contradicting its stability.
\end{itemize}
\end{proof}



\section{Maximum cardinality popular matching}\label{sec:pop}
In this section, we show how to use the reduction in the previous section to compute a maximum cardinality matching that is popular amongst all feasible matchings..
Thus, amongst
all feasible matchings, our algorithm outputs the largest popular matching. We call such a matching 
a {\it maximum cardinality popular matching}.

Our algorithm reduces the HRLQ instance $G$ to an HR instance $G'$ as described in Section~\ref{sec:red}. We then
compute a stable matching $M'$ in $G'$. Finally, to obtain a matching $M$ in $G$ we describe a simple map function.
For every $h \in \HH$, let 
$M(h) = \RR \cap \left( \bigcup_{s=0}^{\ell-1} M'(h^s) \right)$.
Note that $M(h)$ denotes the set of non-dummy residents matched to any copy $h^s$ of $h$ in $M'$.
Thus, a resident $r$ is matched to a hospital $h$ in $M$ if and only if $r$ is matched to a level-$s$ copy of $h$ in $M'$ for 
some $s \in \{0, \ldots, \ell-1\}$. We say that $M = map(M')$.
We now show some useful invariants about the matching $M = map(M')$. 

\REM{
In the previous section, we saw some properties of a stable matching $M'$ in $G'$. Any matching
$M'$ in $G'$ can be mapped to a matching $M$ in $G$ in a straight forward way: For a hospital $h\in \HH$, let $M'(h)$ be the set of non-dummy residents matched to any copy $h^s$ of $h$ in $M'$. Thus $M'(h)=
\bigcup_{s=0}^{\ell-1} M'(h^s)$. Define $M(h)=M'(h)$. Thus a resident $r$ is matched to a hospital
$h$ in $M$ if and only if $r$ is matched to a level-$s$ copy of $h$ in $M'$ for some $s\in\{0,\ldots,\ell-1\}$. We say that $M=map(M')$.
}



\noindent {\bf Division of $\RR$ and $\HH$ into subsets:} 
We divide the residents and hospitals in $G$ into subsets depending upon a matching $M'$ in $G'$.
Let $R_i$ be the set of non-dummy residents matched to a level-$i$ hospital $h^i$ in $M'$. We define
the same set $R_i$ in $G$ as well. Further, define $H_j$ to be the set of hospitals $h\in H$ such
that $ \RR \cap M'(h^j) \neq \emptyset$, that is, level-$j$ copy $h^j$ of $h$ is matched to at least one non-dummy resident in $M'$.  
Define the unmatched residents to be in $R_0$. Also, a non-lower-quota hospital $h$ such that $M(h)=\emptyset$ is defined to be in $H_1$, and a lower-quota hospital $h$ with $M(h)=\emptyset$ is defined to be in $H_{\ell-1}$. 
The following lemma summarizes the properties
of the sets $R_i$ and $H_j$.

\begin{lemma}\label{lem:G-invariants}
Let $M = map(M')$ where $M'$ is a stable matching in $G'$. Then, the following hold:
\begin{enumerate}
\item\label{itm:active-hosp} Each hospital is present in at most two sets $H_j,H_{j+1}$ for some $j$. We say that 
$h\in H_j\cap H_{j+1}$.
\item\label{itm:edges} If $h\in H_j\cap H_{j+1}$, then there is no edge from $h$ to any $r\in R_i$
where $i\leq j-1$.
\item\label{itm:undersub} All the non-lower-quota hospitals that are undersubscribed in $M$ are in $H_1$. Moreover, no hospital that is undersubscribed in $M$ is in $H_0$.
\item\label{itm:def} All the deficient
lower-quota hospitals from $M$ are in $H_{\ell-1}$. 
\item\label{itm:undersub-edges}If a non-lower-quota hospital is undersubscribed, it has no edge to any resident in $R_0$.
If a lower-quota hospital is deficient, it does not have an edge to any resident in $R_i$ for $i < \ell-1$.
Similarly an unmatched resident does not have an edge to any hospital in $H_1\cup\ldots\cup H_{\ell-1}$.
\item\label{itm:lq-hosp} Let $h\in\HH$ be such that $|M(h)|>q^-(h)$. Then $h\notin H_2\cup\ldots\cup H_{\ell-1}$. 
\end{enumerate}
\end{lemma}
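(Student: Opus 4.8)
The plan is to obtain every item from stability of the matching $M'$ in $G'$ together with the combinatorial structure of the reduction in Section~\ref{sec:red}: a hospital $h$ has copies $h^0,\dots,h^{\ell-1}$ whose capacities are the specific values fixed there (depending in particular on whether $h$ has a positive lower quota); each copy $h^s$ ranks the non-dummy residents exactly as $h$ ranks them in $G$ and puts all dummy residents last; and enough dummy residents are added that every copy is fully subscribed in any stable matching of $G'$. The recurring tool is a \emph{monotonicity} consequence of stability: for a fixed $h$, the set of levels $s$ with $\RR\cap M'(h^s)\neq\emptyset$ is an interval, and a copy of $h$ can hold a non-dummy resident only if the neighbouring copy on the side residents prefer is completely full of non-dummy residents. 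The recurring proof template is: if an item fails, produce a copy of some hospital that is not full, or is matched to a dummy, or is matched to a non-dummy resident that its hospital ranks below some resident not matched to it, and pair that copy with a resident who strictly prefers it to its $M'$-partner --- contradicting stability. Item~\ref{itm:active-hosp} then follows at once: the interval of occupied levels cannot have length $\ge 3$, since at an interior level monotonicity forces that copy to be full of non-dummy residents while a blocking pair is still available one step further out.

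For item~\ref{itm:edges} I would argue by contradiction: if $h\in H_j\cap H_{j+1}$ has a $G$-edge to some $r\in R_i$ with $i\le j-1$, then by item~\ref{itm:active-hosp} every copy of $h$ below level $j$ is matched only to dummies, so $r$ is not matched to any such copy; combining the level ordering of $r$'s preference list in $G'$ with the fact that $h$ still carries non-dummy residents at levels $j$ and $j+1$, one shows $r$ strictly prefers an available (non-full or dummy-filled) copy of $h$ to its current partner, which blocks $M'$. Items~\ref{itm:undersub}, \ref{itm:def}, and~\ref{itm:undersub-edges} then follow by combining items~\ref{itm:active-hosp} and~\ref{itm:edges} with the default-placement conventions: if $M(h)=\emptyset$ the placement is by fiat; if a non-lower-quota hospital has $M(h)\neq\emptyset$ but is undersubscribed, its topmost occupied copy is not full, so by monotonicity no strictly lower copy is occupied, which the capacity/level accounting of Section~\ref{sec:red} pins to level $1$ (and never to level $0$, giving the second sentence of item~\ref{itm:undersub}); a deficient lower-quota hospital is driven to level $\ell-1$ by the dual argument, since its lower copies fill before the top one; and item~\ref{itm:undersub-edges} is just the $j=1$, $j=\ell-1$, and $i=0$ instances of item~\ref{itm:edges}, together with the observation that an unmatched resident adjacent to any hospital in $H_1\cup\cdots\cup H_{\ell-1}$ would block $M'$.

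The step I expect to be the main obstacle is item~\ref{itm:lq-hosp}, because it is quantitative rather than order-theoretic. Suppose $|M(h)|>q^-(h)$ but $h$ carries a non-dummy resident at some level $j\ge 2$; by monotonicity the copies of $h$ below level $j$ --- at least those at levels $1,\dots,j-1$ --- are completely full of non-dummy residents, so $|M(h)|$ is at least the sum of their capacities, and the capacities that Section~\ref{sec:red} assigns to the copies of a lower-quota hospital are chosen precisely so that this lower bound is incompatible with $|M(h)|>q^-(h)$; equivalently, a lower-quota hospital keeps a non-dummy resident at a level $\ge 2$ only while it is still below its lower quota. For a non-lower-quota hospital the claim is immediate, since $q^-(h)=0$ and by the reduction such a hospital never occupies a level above $1$. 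Carrying this out correctly forces one to write down the exact capacities of $h^0$ versus $h^1,\dots,h^{\ell-1}$ separately for lower-quota and non-lower-quota hospitals, and to re-examine with those numbers in hand that the monotone packing behaves as claimed and that the default placements of empty hospitals remain consistent with the two-set bound of item~\ref{itm:active-hosp}; this is the one place where the precise numbers of the reduction, not merely its qualitative shape, are indispensable.
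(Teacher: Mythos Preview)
Your overall plan --- derive every item from blocking-pair arguments against the stability of $M'$ --- is exactly the paper's, and items~\ref{itm:active-hosp}--\ref{itm:undersub-edges} are essentially right (the paper packages the repeated blocking-pair observations into an auxiliary Lemma~\ref{lem:SM-invariants} about $M'$, but the content is the same).

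The genuine gap is item~\ref{itm:lq-hosp}. You apply your monotonicity in the wrong direction there, and this is not a cosmetic slip. In the reduction residents prefer \emph{higher} level copies, so ``$h^j$ active $\Rightarrow$ $h^{j+1}$ full of non-dummies'' is the correct statement --- and indeed you use precisely this direction in your argument for item~\ref{itm:edges}, where you (correctly) say that every copy of $h$ \emph{below} an active level is matched only to dummies. But in item~\ref{itm:lq-hosp} you assert the opposite: that the copies \emph{below} level $j$ are ``completely full of non-dummy residents'', and then try to get a lower bound on $|M(h)|$ that collides with $|M(h)|>q^-(h)$. That collision cannot happen: both are lower bounds on $|M(h)|$, and with the monotonicity oriented correctly the levels below $j$ contribute only dummies, so no lower bound of the kind you want is available.

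The paper's actual proof of item~\ref{itm:lq-hosp} is an \emph{upper}-bound argument and is more delicate than a single inequality. It splits into two cases ($h\in H_s$ with $s\ge 3$ versus $h\in H_1\cap H_2$) and uses fine structural facts about which dummy sets are matched to which copies in a stable matching of $G'$ (in particular, that all dummies except the first $q^+(h)-q^-(h)$ of $\D^1_h$ are top choices of some copy and hence must be matched, and that for $h\in H_1\cap H_2$ the sets $M'(h^0),M'(h^3),\dots,M'(h^{\ell-1})$ are pinned down exactly). From this one counts that the total number of non-dummy residents across all copies is at most $q^-(h)$, contradicting $|M(h)|>q^-(h)$. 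So the ``precise numbers of the reduction'' you flag as indispensable enter through a dummy-counting upper bound, not a non-dummy lower bound; to fix your sketch you need to reverse the direction of the packing and carry out that two-case count.
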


\begin{proof}
We prove each statement below:
\begin{itemize}
\item{\em Proof of \ref{itm:active-hosp}:} This directly follows from part \ref{inv:three} of Lemma~\ref{lem:SM-invariants}.
\item{\em Proof of \ref{itm:edges}:} Follows from part \ref{inv:steep} of Lemma~\ref{lem:SM-invariants}.
\item{\em Proof of \ref{itm:undersub}:} Let $h$ be undersubscribed in $M$. Thus $|M(h)|<q^+(h)$.
We show that $h\notin H_0$ by showing 
that $h^0$ cannot be active in $M'$. 
For the sake of contradiction, let $(r,h^0)\in M'$. Thus $r\in \langle list_h\rangle$. As $h$ is undersubscribed in $M$, $h^0$ must
be matched in $M'$ to at least one dummy resident in $\D^0_h$, and consequently, $h^1$ must
be matched to at least one dummy resident in $\D^1_h$. But $r$ prefers $h^1$ over $h^0$ in $G'$
and $h^1$ prefers any resident from $\langle list_h\rangle$ over any resident in $\D^1_h$. Thus
$(r,h^1)$ forms a blocking pair w.r.t. $M'$, contradicting the stability of $M'$ in $G'$. 
Hence $h^0$ cannot be active in $M'$ and hence $h\notin H_0$. 

Let $h$ be a non-lower-quota hospital undersubscribed in $M$.
By construction of $G'$, the hospitals $h^2,\ldots,h^{\ell-1}$ have capacity zero and hence cannot be matched to any resident in $M'$. Therefore $h\notin H_2\cup\ldots\cup H_{\ell-1}$. Combining with
the above argument, $h\notin H_0$, and hence $h\in H_1$.
\item{\em Proof of \ref{itm:def}:} 
If a lower-quota hospital $h$ is deficient in $M$, one of its copies is undersubscribed
in $M'$. By part~\ref{inv:two} of Lemma~\ref{lem:SM-invariants}
only $h^{\ell-1}$ can remain undersubscribed in $M'$. Moreover, as $h^{\ell-1}$ is undersubscribed,
no $h^s$, $s<\ell-1$ can be active, otherwise its matched resident creates a blocking pair with $h^{\ell-1}$.

\item{\em Proof of \ref{itm:undersub-edges}:} Let there be an edge $(r,h)$ in $G$ such that $h$ is a 
non-lower-quota hospital undersubscribed in $M$, and let $r\in R_0$. By part \ref{itm:undersub} above,
$h\in H_1$. 
\begin{itemize}
\item If $r$ is unmatched in $M$ and hence in $M'$, then $(r,h^0)$ blocks $M'$. This is
because, since $h^1$ is active in $M'$ implies that $M'(h^0)$ must contain a dummy resident in $\D^0_h$.
But $h^0$ prefers any resident in $\langle list_h \rangle$ over any resident in $\D^0_h$ and $r$ prefers $h^0$ since $r$
in unmatched in $M'$.

\item If $r$ is matched in $M$ and hence in $M'$, it must
be matched to a hospital $\bar{h}\in H_0$ in $M$, and hence to $\bar{h}^0$ in $M'$. But any resident prefers a 
level-$1$ hospital over any level-$0$
hospital in its preference list. Also note that since $h \in H_1$ and is undersubscribed, $M(h^0) = \D^0_h$. Thus,
$M(h^1)$ contains at least one dummy resident from $\D^1_h$. Since $h^1$ prefers $r$ over a dummy resident in $\D^1_h$,
it is clear that $(r,h^1)$ forms a blocking pair with respect to $M'$.
\end{itemize}

A similar argument applies to the case when $h$ is a deficient lower-quota hospital in $M$.

Now let a resident $r$ be unmatched in $M$ and hence in $M'$, and suppose there is an edge $(r,h)$
where $h\in H_i$ for $i>0$. As $h\in H_i$, by part \ref{inv:four-one} of Lemma \ref{lem:SM-invariants},
$M'(h^0)$ contains at least one dummy resident in $\D^0_h$. As $(r,h)$ is an edge in $G$, $r\in \langle list_h \rangle$. But $h^0$
prefers any resident in $\langle list_h \rangle$ over any resident in $\D^0_h$, thus $(r,h^0)$ blocks $M'$ contradicting
its stability.

\item{\em Proof of \ref{itm:lq-hosp}:} For the sake of contradiction, assume that $h\in H_s$ for $s\geq 2$ and still $|M(h)|>q^-(h)$. In this case, $q^-(h)>0$ otherwise in $G'$, $q^+(h^2)=\ldots=q^+(h^{(\ell-1)})=0$ and hence $h\notin H_s$ for any $s\geq 2$. We consider two cases:

{\em Case $1$: $h\in H_s$ for $s\geq 3$:} Consider the matching $M'$ in $G'$. In this case, by part \ref{inv:four-two} of Lemma \ref{lem:SM-invariants}, the level copies $h^0$ and $h^1$ are inactive and $M'(h^i)=\D^i_h$
for $i\in \{0,1\}$. But then, as in the proof of part \ref{inv:one} of Lemma \ref{lem:SM-invariants}, all the
dummy residents must be matched in $M'$. Recall that all the dummy residents in $\D_h$ 
except the first $q^+(h)-q^-(h)$ ones in $\D^1_h$ are top choices of some copy of $h$. 
Moreover, 
for each $s$ except $s=1$, $q^+(h^s)=|\D^s_h|$. 
Thus the only dummy residents that can
remain unmatched in a stable matching in $G'$ are the first $q^+(h)-q^-(h)$ dummy residents from $\D^1_h$. 
As they are matched in $M'$ by above argument, $|M(h) | \le |\RR \cap M'(h)| \le q_h - |\D_h| \le q^-(h)$. This
contradicts our assumption that $|M(h)| > q^-(h)$.

{\em Case $2$: $h\in H_1\cap H_2$:} Then, in $G'$, $M'(h^0)=\D^0_h$, $M'(h^s)=\D^{(s-1)}_h$
for $3 \le s \le \ell-1 $ by parts \ref{inv:four-two} and \ref{inv:four-three} of Lemma \ref{lem:SM-invariants}
respectively. Thus $M'(h^1)$ has no resident from $\D^0(h)$ and $M'(h^2)$ has no resident from
$\D^2_h$. Let $k$ be the number of non-dummy residents matched to $h^1$ in $M'$. 

If $k\geq q^-(h)$, then at most first $q^+(h)-q^-(h)$ dummy residents from $\D^1_h$ are present in 
$M'(h^1)$. Thus 
all the positions of $h^2$ get matched to residents from $\D^1_h$. Recall that $|\D^1_h|=q^+(h)=q^+(h^1)$
whereas $q^+(h^2)=q^-(h)$ and only the last $q^-(h)$ residents from $\D^1_h$ are present in the
preference list of $h^2$. Thus $h^2$ can not be active in $M'$, contradicting the assumption that
$h\in H_1\cap H_2$. 

Therefore $k< q^-(h)$. Thus exactly $k$ positions of $h^2$ are matched to dummy residents from
$\D^1(h)$. We claim that the remaining $q^+(h^2)-k=q^-(h)-k$ positions of $h^2$ must be matched
to non-dummy residents in $M'$. If not, then $h^2$ must be matched to some dummy resident in $\D^2(h)$,
as $h^2$ is their top choice. This contradicts the above statement that $M'(h^2)$ has no resident
from $\D^2_h$. Thus total number of non-dummy residents matched to $h$ in $M$ is $k+q^-(h)-k=q^-(h)$
contradicting the assumption that $|M(h)|>q^-(h)$.     
\end{itemize}
\end{proof}

\noindent Throughout the following discussion, assume that $M$ is a matching which is a map of a stable
matching $M'$ in $G'$ and $N$ is any feasible matching in $G$. We prove below that $M$ is in fact feasible in $G$.
\begin{theorem}\label{thm:feasible}
If $G$ admits a feasible matching, then $M = map(M')$ is feasible for $G$. 
\end{theorem}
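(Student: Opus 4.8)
The plan is to show that $M = map(M')$ respects both the upper quota $q^+(h)$ and the lower quota $q^-(h)$ of every hospital $h$, given that $G$ admits a feasible matching. The upper quota is the easy direction: since the copies $h^0, \ldots, h^{\ell-1}$ of $h$ in $G'$ have capacities summing to exactly $q^+(h)$ (by the construction in Section~\ref{sec:red}), and $M(h) \subseteq \RR \cap \bigcup_s M'(h^s)$, we immediately get $|M(h)| \le q^+(h)$. So the real content is the lower-quota constraint: I must show that every lower-quota hospital $h$ is either matched with at least $q^-(h)$ non-dummy residents in $M$, or else is entirely empty — wait, no: feasibility of $M$ means $|M(h)| \ge q^-(h)$ whenever $q^-(h) > 0$. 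So the heart of the proof is to rule out the case where a lower-quota hospital $h$ has $0 < |M(h)| < q^-(h)$, and also the case $|M(h)| = 0$ when $q^-(h) > 0$.

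The main tool will be Lemma~\ref{lem:G-invariants}. First I would handle deficient lower-quota hospitals: by part~\ref{itm:def}, any lower-quota hospital $h$ that is deficient in $M$ lies in $H_{\ell-1}$, and by part~\ref{itm:undersub-edges} such an $h$ has no edge to any resident in $R_i$ for $i < \ell-1$; moreover by part~\ref{itm:lq-hosp} (applied contrapositively), if $h \in H_{\ell-1}$ is deficient then $|M(h)| \le q^-(h)$, but deficiency means $|M(h)| < q^-(h)$, so in fact all neighbours of $h$ are ``stuck'' at high levels. The idea is then a counting / augmenting-structure argument: take the feasible matching $N$ in $G$, and look at the symmetric difference or more precisely trace out, starting from a deficient lower-quota hospital $h$, the alternating structure. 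Since $N$ is feasible, $h$ receives at least $q^-(h) > |M(h)|$ residents in $N$, so there is a resident $r \in N(h) \setminus M(h)$; by the edge restrictions $r \in R_i$ only for $i = \ell-1$, meaning $r$ is matched in $M$ to a level-$(\ell-1)$ hospital, which must itself be a lower-quota hospital that is ``full up to its lower quota'' or higher. Following this chain (hospital $\to$ resident via $N$, resident $\to$ hospital via $M$), the level indices can only stay the same or the alternating path must terminate; I would argue the path cannot close up into a structure that would let $M$ ``fix'' the deficiency, so it terminates at a hospital or resident witnessing that $N$ also fails to be feasible — a contradiction.

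Concretely, the cleanest route is probably a potential/charging argument: assign to each hospital the ``deficiency'' $\max(0, q^-(h) - |M(h)|)$ and to each resident the indicator of being unmatched, and show that in the graph $M \oplus N$ these deficiencies/surpluses cannot be balanced unless $N$ is itself infeasible. Alternatively — and I suspect this is what the authors do — one shows directly that for each lower-quota hospital $h$, $|M(h)| \ge \min(q^-(h), |N(h)|) \ge q^-(h)$ by exhibiting, for each resident in $N(h)$ that is ``missing'' from $M(h)$, a distinct resident that $M$ assigns to $h$ instead, using the level structure to guarantee these substitutes are distinct and actually assigned to $h$. The main obstacle I anticipate is the bookkeeping that makes the substitution injective across the whole alternating structure: one must use parts~\ref{itm:edges} and~\ref{itm:undersub-edges} carefully to see that following an $N$-edge out of a deficient hospital lands on a resident whose $M$-partner sits at a strictly controlled level, and then that iterating this never revisits a hospital already ``charged'' — i.e., the alternating paths in $M \oplus N$ emanating from deficient lower-quota hospitals are genuinely disjoint and monotone in level. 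Handling the boundary cases for empty non-lower-quota hospitals (part~\ref{itm:undersub}) and unmatched residents (the last clause of part~\ref{itm:undersub-edges}) will be comparatively routine once the deficient-lower-quota case is settled.
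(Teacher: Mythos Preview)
Your setup is right: contradiction via a deficient lower-quota hospital $h$, the feasible matching $N$, and an alternating path in $M\oplus N$ emanating from $h$, using Lemma~\ref{lem:G-invariants} to constrain how levels can change along the path. But the contradiction you aim for is the wrong one, and the key quantitative step is missing.

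You repeatedly suggest the path will terminate at a vertex ``witnessing that $N$ also fails to be feasible'', or that ``deficiencies/surpluses cannot be balanced unless $N$ is itself infeasible''. That is not how the paper argues, and it is not clear such an argument goes through: the other endpoint of the path is either a hospital $h'$ with $|M(h')|>|N(h')|\ge q^-(h')$ (so $h'$ is \emph{over} its lower quota in $M$, not deficient) or a resident unmatched in $M$; neither says anything bad about $N$. The actual contradiction is purely a counting one against the \emph{definition of $\ell$}. The construction in Section~\ref{sec:red} sets the number of levels so that $\sum_h q^-(h)=\ell-2$. Now: the deficient $h$ is in $H_{\ell-1}$ (part~\ref{itm:def}), the other endpoint forces level $\le 1$ (parts~\ref{itm:lq-hosp} and \ref{itm:undersub-edges}), and along the path each $N$-edge can drop the level by at most one (part~\ref{itm:edges}). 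Hence the path must contain at least $\ell-1$ hospital occurrences at levels $\ge 2$, all of which are lower-quota hospitals; but each such hospital $h$ can appear at most $q^-(h)$ times on the path (part~\ref{itm:lq-hosp} bounds $|M(h)|\le q^-(h)$ when $h\in H_2\cup\cdots\cup H_{\ell-1}$), so the total number of such occurrences is at most $\sum_h q^-(h)=\ell-2$. That is the contradiction.

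Your proposal never invokes the value of $\ell$, and the ``monotone in level'' remark (levels ``can only stay the same'') is slightly off: levels can drop by one per step, and it is precisely this slow descent over $\ell-1$ levels, pitted against the budget $\ell-2$, that does the work. The injective-substitution and potential/charging sketches you offer would need an entirely different mechanism to close, and it is not obvious they do without this same level-counting idea.
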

\begin{proof}
Suppose $M$ is not feasible. Thus, there
is a deficient lower-quota hospital $h$ in $M$. Let $N$ be a feasible matching in $G$. 
Consider decomposition of $M\oplus N$ into (possibly non-simple) paths and cycles as described in Section \ref{sec:pop-def}.
As $h$ is deficient in $M$ and not deficient in $N$, there must be a path $\rho$ in $M\oplus N$
ending in $h$. Moreover, if the other end of $\rho$ is a hospital $h'$ then $|M(h')|-|N(h')|>0$. Note that in this case, $\rho$ has even-length and hence ends with a $M$-edge. The other case is where $\rho$ ends in a resident $r$ and hence ends with a $N$-edge. We consider the
two cases below:

\begin{itemize}
\item {\bf $\rho$ ends in a hospital $h'$:} As $h$ is deficient in $M$,
$h\in H_{\ell-1}$ by part \ref{itm:def} of Lemma \ref{lem:G-invariants}. 
Also, since $|M(h')|>|N(h')|\geq q^-(h')$, by part \ref{itm:lq-hosp} of Lemma \ref{lem:G-invariants},
$h'\in H_0\cup H_1$. Thus $\rho$ starts at $H_{\ell-1}$ and ends in $H_0$ or $H_1$. Let 
$\rho=\langle h,r_1,h_1,r_2,h_2,\ldots,r_t, h_t, r',h'\rangle$, where $(r_i,h_i)\in M$ and $(r',h')\in M$. 
We show below that such a path $\rho$ can not exist and hence $M$ must be feasible.

By part \ref{itm:undersub-edges} of Lemma \ref{lem:G-invariants}, $h$ has edges only to residents in $R_{\ell-1}$. Hence $r_1\in R_{\ell-1}$ and hence $h_1\in H_{\ell-1}$. By part \ref{itm:edges} of Lemma \ref{lem:G-invariants}, $h_1$ has no edges to residents in $R_0\cup\ldots\cup R_{\ell-3}$. Therefore $r_2\in R_{\ell-1}\cup R_{\ell-2}$ and $h_2\in H_{\ell-1}\cup H_{\ell-2}$. Thus each $h_i\in \rho$ can not be in
$H_j$, for any $j<\ell-i$.  But $h'\in H_0\cup H_1$ and hence $r'\in R_0\cup R_1$. Therefore
$h_t\notin H_3\cup\ldots\cup H_{\ell-1}$ by part \ref{itm:edges} of Lemma \ref{lem:G-invariants}, otherwise
$(h_t,r')$ edge can not exist in $G$. In other words, $\rho$ has to contain at least one hospital from each level $i$, $1\leq i \leq \ell-1$. Thus $t\geq \ell-2$. Moreover, all the hospitals in $\rho$ which 
are in $H_{\ell-1}\cup\ldots \cup H_2$ are lower-quota hospitals. Thus $\rho$ has at least $t+1=\ell-1$ lower-quota hospitals. Note that this count includes repetitions, as a hospital can appear multiple times
in $\rho$. However, any hospital in $H_2\cup\ldots\cup H_{\ell-1}$ can not be matched to more than
$q^-(h)$ residents in $M$ by part \ref{itm:lq-hosp} and hence can appear at most $q^-(h)$ times on $\rho$. But then the sum
of lower quotas of all the hospitals is $\ell-2$, contradicting that $\rho$ has a total of $\ell-1$ occurrences of lower-quota hospitals. Thus such a path $\rho$ can not exist and $M$ must be feasible.

\item {\bf $\rho$ ends in a resident $r$:} Now consider the case where $\rho$ ends at a resident $r$. Then the last edge on $\rho$ must be a $N$-edge and hence $r$ is unmatched in $M$. Therefore $r\in R_0$. Let $\rho=\langle h,r_1,h_1,r_2,h_2,\ldots,r_t, h_t, r\rangle$ where $(r_i,h_i)\in M$ for $1\leq i\leq t$ and the remaining
edges are in $N$. Consider the first hospital, say $h_j$ on $\rho$ such that $h_j\in H_2$ and for each $h_i, i<j$, $h_i\in H_3\cup\ldots\cup H_{\ell-1}$. Such an $h_j$ has to exist by the argument given
for the previous case. Moreover, $j\geq \ell-2$ as $\rho$ has to contain at least one hospital from each
level as described in the previous case. Thus the number of occurrences of lower-quota hospitals on
$\rho$ exceeds the sum of lower quotas and hence such a $\rho$ can not exist.
\end{itemize}
This completes the proof of the lemma.
\end{proof}

In Lemma \ref{lem:label} and Theorem \ref{thm:levels} below, we give crucial properties of the division of $\RR$ and $\HH$ that will be helpful in proving popularity of the matching $M$ which is a map of a stable matching $M'$ in $G'$. 
\begin{lemma}\label{lem:label}
Let $N$ be any feasible matching. Let $(r,h)\in M$ and $(r',h)\in N$ such that $r'=$\cor$_h(r, M, N)$.
Further let $h\in H_j\cap H_{j+1}$ and $r\in R_{j+1}$. Further, let $r'\in R_j$. Then the label on
$(r',h)$ edge is $(-1,-1)$. 
\end{lemma}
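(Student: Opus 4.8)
The plan is to unpack the definition of the label on an edge $(r',h)$ of $N$ in terms of how $r'$ and $h$ vote between $M$ and $N$, and to show that in this configuration both prefer their $M$-partner, which by the labelling convention gives the label $(-1,-1)$. Recall that the label's two coordinates record the vote of $r'$ and of $h$ respectively when comparing $N$ against $M$; a coordinate is $-1$ exactly when that vertex strictly prefers its $M$-situation. So the whole statement reduces to two claims: (i) $r'$ prefers $h$'s competitor in $M$, i.e. $r'$ prefers its partner $M(r')$ to $h$; and (ii) $h$ prefers $r$ (its $M$-partner, whom it loses in $N$) to $r'$.

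For claim (i): since $r' \in R_j$, by definition of $R_j$ the resident $r'$ is matched in $M$ (to some level-$j$ hospital), in fact $M(r')$ sits in $H_j$ and $r'$ is matched to the level-$j$ copy $M(r')^j$ in $M'$. On the other hand $h \in H_j \cap H_{j+1}$, so in $G'$ the active copies of $h$ are $h^j$ and $h^{j+1}$, and since $(r',h) \in N$ is an edge of $G$, $r'$ appears on $h$'s list, hence on the list of $h^{j+1}$ (and $h^j$). The key point is that in $G'$ a resident orders the copies of any hospital by level, so $r'$ prefers a level-$(j+1)$ copy of a hospital to a level-$j$ copy. But $r'$ is matched to a level-$j$ copy $M(r')^j$ in the \emph{stable} matching $M'$; if $r'$ preferred $h^{j+1}$ to $M(r')^j$ then, combined with $h^{j+1}$ preferring any genuine list-resident over the dummy residents that must occupy the remaining slots of $h^{j+1}$ (Lemma~\ref{lem:G-invariants}, part~\ref{itm:undersub} / the blocking-pair arguments in its proof), $(r', h^{j+1})$ would block $M'$. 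Hence $r'$ prefers $M(r')$ to $h$ in $G$, giving the first coordinate $-1$. The subtlety I expect to need care with here is exactly which dummy residents sit in $M'(h^{j+1})$ and why $h^{j+1}$ must have such a dummy (i.e.\ is not full of non-dummies): this is where the invariants of Lemma~\ref{lem:SM-invariants}/\ref{lem:G-invariants} about $h \in H_j \cap H_{j+1}$ do the work.

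For claim (ii): $r'$ is defined as $\cor_h(r,M,N)$, the resident that $h$ prefers the least among those it is matched to in $N$ but not in $M$ — the ``corresponding'' resident lost when moving from $M$ to $N$ at $h$. By the standard property of the correspondence/charging used in the popularity analysis (the most-preferred $M$-resident not retained, versus the least-preferred $N$-newcomer), $h$ prefers $r$ to $r'$; more carefully, since $r \in R_{j+1}$ we have $(r,h)$ realized at the level-$(j+1)$ copy $h^{j+1}$ in $M'$, while $r' \in R_j$ forces that if $h$ preferred $r'$ to $r$ we could again exhibit a blocking pair of $M'$ (now $(r', h^{j+1})$ or a rotation thereof), contradicting stability. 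So the second coordinate is also $-1$, and the label is $(-1,-1)$.

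The main obstacle is the first half: translating ``$r' \in R_j$ and $h \in H_j \cap H_{j+1}$'' into a clean statement in $G'$ about which copies are active and which are padded with dummies, and then invoking stability of $M'$ to rule out the would-be blocking pair $(r', h^{j+1})$. Once that is set up, the second half and the bookkeeping of the label coordinates are routine applications of the definitions together with Lemma~\ref{lem:G-invariants}.
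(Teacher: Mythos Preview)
Your argument for claim~(ii) in its second form---if $h$ preferred $r'$ to $r$ then $(r',h^{j+1})$ would block $M'$, since $r\in M'(h^{j+1})$ and $r'$ prefers any level-$(j+1)$ copy to its level-$j$ partner---is correct and is exactly how the paper rules out the second coordinate being $+1$.

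The argument for claim~(i), however, is broken. You want to show that $r'$ prefers $M(r')$ to $h$ in $G$, but the blocking pair $(r',h^{j+1})$ you construct does not depend on that preference at all: $r'$ prefers $h^{j+1}$ to $M(r')^j$ \emph{unconditionally}, simply because $j+1>j$. So if your argument went through, it would show that $(r',h^{j+1})$ blocks $M'$ whenever the hypotheses of the lemma hold, contradicting stability outright; it cannot therefore be establishing anything about $r'$'s $G$-preference between $h$ and $M(r')$. The gap you yourself flag---why $h^{j+1}$ must contain a dummy---is real and in general false: $h^{j+1}$ may be filled entirely with non-dummy residents, all of whom $h$ prefers to $r'$, in which case $(r',h^{j+1})$ does not block.

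The fix is to aim at the \emph{other} copy. Suppose, toward contradiction, that $r'$ prefers $h$ to $M(r')$ in $G$. Then at the \emph{same} level $j$, preferences in $G'$ mirror those in $G$, so $r'$ prefers $h^j$ to $M(r')^j$. Now use that $h\in H_{j+1}$: since $h^{j+1}$ is active, the structural invariants (Lemma~\ref{lem:SM-invariants}) force $M'(h^j)$ to contain one of $h$'s last dummies, and $h^j$ prefers any genuine resident from $\langle list_h\rangle$ over that dummy. Hence $(r',h^j)$ blocks $M'$. This is precisely the paper's argument for the first coordinate; together with your (correct) argument for the second coordinate it completes the proof. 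In short, you had the two blocking pairs $(r',h^{j+1})$ and $(r',h^j)$ swapped between the two claims: $(r',h^{j+1})$ handles the hospital's vote using $r\in M'(h^{j+1})$, while $(r',h^j)$ handles the resident's vote using the guaranteed dummy in $M'(h^j)$.
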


\begin{proof}
Clearly $r'$ is not matched to $h$ in $M$, as $\cor(r)$ is picked only from $M(h)\oplus M'(h)$
and $r'\in M'(h)$. Let $r'\in M(h')$. Assume, for the sake of contradiction, that $(r',h)$ does not have
$(-1,-1)$ label. Consider the same edge in $G'$. In $G'$, this is an edge between a $j$-level resident
and a $j+1$-level hospital.
The label on $(r',h)$ in $G$ can not be $(\times,1)$, as this cause the label on the $(r',h^{j+1})$
edge in $G'$ to be $(1,1)$. This is because, if $h$ prefers $r'$ over $r$ in $G$, the preference remains same in $G'$ as well. On the other hand, $r'$ prefers any $j+1$-level hospital over any $j$-level hospital, and hence $h^{j+1}$ over $h'^j$. So it must be the case that the label on $(r',h)$ in $G$ must be 
$(1,-1)$. But in this case, in $G'$, $h^j$ is matched to one of its last dummies since $h^{j+1}$ is active.
Thus $(r',h^j)$ forms a blocking pair with respect to $M'$ in $G'$. This proves that the label on $(r',h)$
in $G$ must be $(-1,-1)$. 
\end{proof}

Let $\rho$ be a path in $M\oplus N$ where $M$ is the map of a stable matching $M'$ in $G'$ and $N$ is any feasible matching in $G$. 
Here $\rho$ is constructed according to the decomposition described in Section~\ref{sec:pop-def}. 
Furthermore, the labels on edges of $N \setminus M$ are assigned as described in Section~\ref{sec:pop-def}. 
The following theorem is similar to the one proved in \cite{Kavitha14} for the stable marriage setting.
We adapt the proof here for our setting.
\begin{theorem}\label{thm:levels}
Let $\rho=\langle h_0,r_1,h_1,r_2,h_2,\ldots,h_t, r_{t+1}\rangle$. Moreover, let 
$h_0\in H_p\cap H_{p+1}$ and $r_{t+1}\in R_q$. Then the number of $(1,1)$ edges in $\rho$ is at most
the number of $(-1,-1)$ edges plus $q-p$. Thus $(r_k,h_k)\in M$ for all $k$ and $(h_k,r_{k+1})\in N$ with $r_{k+1}=$\cor$_{h_k}(r_k, M, N)$.
\end{theorem}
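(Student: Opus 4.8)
The plan is to track how the level index changes as we walk along the alternating path $\rho$, treating the sequence of level-memberships of the hospitals $h_0, h_1, \dots, h_t$ as a kind of potential, and to argue that each $(1,1)$ edge forces the level to move strictly "up" toward the resident end while $(-1,-1)$ edges are the only edges allowed to move it "down". First I would set up notation: since $h_0 \in H_p \cap H_{p+1}$ and (by the decomposition) $(r_1,h_0)\in N$, I would use Lemma~\ref{lem:G-invariants}\ref{itm:edges} to pin down which $R_i$ the resident $r_1$ can lie in, and then use the fact that $(r_1,h_1)\in M$ together with the definition of $R_i$ and $H_j$ to determine the level of $h_1$. Iterating, I would establish the invariant that the hospital $h_k$ lies in $H_{j_k}\cap H_{j_k+1}$ for a sequence $j_k$ that can only decrease by $1$ when an edge carries the label $(-1,-1)$ and is otherwise non-increasing only in a controlled way — more precisely, that the difference (number of $(1,1)$ edges seen so far) $-$ (number of $(-1,-1)$ edges seen so far) is bounded by $j_0 - j_k$ along the path, i.e. a $(1,1)$ edge pushes us one level lower and a $(-1,-1)$ edge is the only thing that can push us one level higher.

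The key technical step is the local analysis at each vertex of $\rho$. At a hospital $h_k$ with incoming $N$-edge $(r_k,h_k)$ and outgoing $M$-edge $(r_k, \ldots)$ — wait, more carefully: $(r_k,h_k)\in N$ and $(r_{k+1}? )$; I would be careful with the parity, using that consecutive edges alternate between $N$ and $M$, and that $r_{k+1}=\mathrm{cor}_{h_k}(r_k,M,N)$ is exactly the resident whose $M$-partnership with $h_k$ is "charged" against the $N$-edge $(r_k,h_k)$. The crucial input here is Lemma~\ref{lem:label}: when the $N$-edge goes from a level-$j$ resident to a hospital in $H_j\cap H_{j+1}$ whose $M$-side resident is one level up (in $R_{j+1}$), the label is forced to be $(-1,-1)$. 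Combined with Lemma~\ref{lem:G-invariants}\ref{itm:edges} (no hospital in $H_j\cap H_{j+1}$ has an edge down to $R_i$ with $i \le j-1$), this shows that as we traverse an $N$-edge the resident's level can drop by at most one, and when it does drop by one the label must be $(-1,-1)$; whereas a $(1,1)$ label corresponds to the level staying the same or the hospital's correction resident being one level lower, which decreases $j_k$. Summing these local inequalities over all $t$ edge-pairs of $\rho$ telescopes to give (number of $(1,1)$ edges) $\le$ (number of $(-1,-1)$ edges) $+ (j_0 - j_t)$, and then relating $j_t$ to $q$ via the terminal resident $r_{t+1}\in R_q$ and $(r_{t+1},h_t)\in M$ (so $h_t\in H_q$, forcing $j_t \ge q$... or the appropriate inequality) gives exactly the bound (number of $(1,1)$) $\le$ (number of $(-1,-1)$) $+\,q-p$.

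The second sentence of the conclusion — that $(r_k,h_k)\in M$ for all $k$ and $(h_k,r_{k+1})\in N$ with $r_{k+1}=\mathrm{cor}_{h_k}(r_k,M,N)$ — is really the structural statement underpinning the whole argument rather than a consequence of the counting bound, so I would establish it first (or simultaneously): it is just the statement that $\rho$, as produced by the decomposition of Section~\ref{sec:pop-def}, alternates $M$- and $N$-edges starting appropriately and that the hospital vertices are traversed "through" their correction residents, which is how the decomposition was defined. I expect the main obstacle to be the bookkeeping at the two endpoints of $\rho$: the hospital end $h_0 \in H_p \cap H_{p+1}$ needs the right choice of starting level (is it $p$ or $p+1$ that seeds the telescoping?), and the resident end needs care because $r_{t+1}$ may be unmatched (in $R_0$) and one must check the last edge is an $N$-edge — these boundary cases are exactly where the "$q-p$" slack comes from, and getting the off-by-one right there, together with correctly handling hospitals that appear multiple times on $\rho$, is the delicate part. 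Everything else is a routine telescoping of the per-edge inequalities supplied by Lemmas~\ref{lem:G-invariants} and~\ref{lem:label}.
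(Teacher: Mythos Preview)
Your plan is essentially the paper's: both arguments track the level index along $\rho$, using Lemma~\ref{lem:G-invariants}\ref{itm:edges} to rule out drops of more than one level and Lemma~\ref{lem:label} to force the label $(-1,-1)$ on any one-level drop, and then sum. The paper packages this as an induction on the number of $(-1,-1)$ edges (split $\rho$ at such an edge into two shorter subpaths and apply the hypothesis), whereas you propose a direct potential/telescoping walk; these are equivalent formulations of the same idea.

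Two places where your write-up is off and would need correcting. First, you have the direction of the level change reversed: a $(1,1)$ edge forces the level to go \emph{up}, not down, and the ingredient you are missing is that there is no $(1,1)$-labeled edge inside a single level $H_k\times R_k$ (such an edge would be a blocking pair for $M'$ in $G'$); conversely, the level can decrease only along a $(-1,-1)$ edge, and then by at most one. With the signs fixed, the telescoping gives $(\#(1,1))-(\#(-1,-1))\le q-p$ as desired. Second, the terminal edge $(h_t,r_{t+1})$ is in $N$, not $M$, so you cannot read off the level of $h_t$ from $r_{t+1}$ via an $M$-edge; the endpoint bookkeeping is handled simply by the fact that the last $N$-edge is itself subject to the same per-edge level inequality.
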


\begin{proof}
We prove this by induction on the number of $(-1,-1)$ edges. Note that, except $h_0$, all the $h_i$s
are matched in $M'$, and hence we can consider them at the same level as their matched residents.

Base case: Let $\rho$ have no $(-1,-1)$ edges. As $\rho$ starts at $h\in H_p\cap H_{p+1}$, $r_1$ has to
be in level-$p+1$ or above. This is because there is no edge from $h$ to a resident in $R_0\cup\ldots\cup R_{p-1}$, and if $r_1\in R_p$ then the label on $(h_0,r_1)$ must be $(-1,-1)$ by Lemma \ref{lem:label}.
By assumption, there is no $(-1,-1)$ edge in $\rho$. So $r_1\in R_{j}$ for some $j$, $p+1\leq j\leq \ell$.
Therefore $h_1\in H_{j}$. 

Thus the path can only use edges from a hospital at a lower level to a resident at the same or higher  level. Further, there is no $(1,1)$ edge in $H_k\times R_k$ for any $k$. So $(1,1)$ edges can 
appear in $\rho$ only when it goes from a hospital in a lower level to a resident in a higher level. So
there can be at most $q-p$ many $(1,1)$ edges on $\rho$.

Induction step: Let the theorem hold for at most $i-1$ many $(-1,-1)$ edges. Let $(h_k,r_{k+1})$ be
one such edge. Further, let $h_k\in H_a$ and $r_{k+1}\in R_b$. Consider the two subpaths
$\rho_1=\langle h_0,\ldots,r_k\rangle$ and $\rho_2=\langle h_{k+1}\ldots,r_{t+1}$. As the number of
$(-1,-1)$ edges in each of $\rho_1$
and $\rho_2$ is less than $i$, the induction hypothesis holds. Therefore, the number of $(1,1)$ edges
in $\rho_1$ is at most $a-p$ plus the number of $(-1,-1)$ edges in $\rho_1$. Similarly, the number
of $(1,1)$ edges in $\rho_2$ is $q-b$ plus the number of $(1,1)$ edges in $\rho_2$. The number of $(-1,-1)$ edges in $\rho$ is one more than the total number of $(-1,-1)$ edges in $\rho_1$ and $\rho_2$. Hence the number
of $(1,1)$ edges in $\rho$ is at most the number of $(-1,-1)$ edges in $\rho$ plus $a-p+q-b-1$.
As there is an edge between $h_k$ and $r_{k+1}$, $b\geq a-1$ by Invariant $2$. Thus $a-p+q-b-1\leq
q-p$, which completes the proof.
\end{proof}

The following theorem shows that $M$
is a popular matching amongst all the feasible matchings in $G$. 
\begin{theorem}\label{thm:pop}
Let $N$ be
any feasible matching in $G$.
\begin{enumerate}
\item If $\rho$ is an alternating cycle in the decomposition of $M\oplus N$, then $\Delta(M\oplus \rho,M)\leq \Delta(M,M\oplus \rho)$.
\item If $\rho$ is an alternating path in the decomposition of $M\oplus N$ with exactly one end-point matched in $M$,
then $\Delta(M\oplus \rho,M)\leq \Delta(M,M\oplus \rho)$.
\item If $\rho$ is an alternating path in the decomposition of $M\oplus N$ with both the end-points matched in $M$
then $\Delta(M\oplus \rho,M) \leq \Delta(M,M\oplus \rho)$.
\end{enumerate}
\end{theorem}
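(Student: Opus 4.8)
The plan is to prove all three items by the standard decomposition-and-charging argument. First, to see why they suffice: writing $N=M\oplus(M\oplus N)$ and using that the paths and cycles $\rho$ produced by the decomposition of Section~\ref{sec:pop-def} are vertex-disjoint, we get $\Delta(N,M)-\Delta(M,N)=\sum_{\rho}\bigl(\Delta(M\oplus\rho,M)-\Delta(M,M\oplus\rho)\bigr)$; since $M$ is feasible by Theorem~\ref{thm:feasible}, the three inequalities therefore yield $\Delta(M,N)\ge\Delta(N,M)$, which is exactly popularity of $M$ among all feasible matchings.

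Next I would convert each per-component inequality into a statement about edge labels. On a component $\rho$, every internal resident and every internal occurrence of a hospital casts exactly one vote, read off from the coordinate of the label on its incident $N$-edge that corresponds to it; hence an $N$-edge with label $(\alpha,\beta)$ contributes $\alpha+\beta$ to $\Delta(M\oplus\rho,M)-\Delta(M,M\oplus\rho)$. So a $(1,1)$ edge contributes $+2$, a $(-1,-1)$ edge contributes $-2$, the mixed labels contribute $0$, and the labeling rules together with Lemma~\ref{lem:label} and the stability of $M'$ ensure that no edge bearing a $\times$ in its label contributes a positive amount on the components that arise. An endpoint of $\rho$ that is matched in $M$ (a resident losing its partner, or a hospital vacating an occupied slot) contributes $-1$, and an endpoint not matched in $M$ — an unmatched resident, or a free slot of an undersubscribed non-lower-quota hospital, there being no deficient lower-quota hospital in $M$ again by Theorem~\ref{thm:feasible} — contributes $+1$. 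Putting this together, $\Delta(M\oplus\rho,M)-\Delta(M,M\oplus\rho)\le 0$ reduces to $\#\{(1,1)\text{ edges of }\rho\}\le\#\{(-1,-1)\text{ edges of }\rho\}$ for items 1 and 2 (a cycle has no endpoints; a path with exactly one end matched in $M$ gets $-1+1=0$ from its ends), and to $\#\{(1,1)\text{ edges of }\rho\}\le\#\{(-1,-1)\text{ edges of }\rho\}+1$ for item 3.

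I would then supply these label bounds from Theorem~\ref{thm:levels}. For a cycle, the inductive argument of Theorem~\ref{thm:levels} applies unchanged except that the walk returns to its start, so the additive slack $q-p$ is replaced by $0$ and we get $\#(1,1)\le\#(-1,-1)$, as required. For a path I would apply Theorem~\ref{thm:levels} to its maximal sub-walk of the form $\langle h_0,r_1,\dots,h_t,r_{t+1}\rangle$ and handle the at most two leftover edges and the special endpoints separately, using that the special endpoints sit at extreme levels: an unmatched resident is in $R_0$; an undersubscribed non-lower-quota hospital is in $H_1$ (part~\ref{itm:undersub} of Lemma~\ref{lem:G-invariants}); and an $M$-matched hospital endpoint has $|M(h)|>|N(h)|\ge q^-(h)$ and so lies in $H_0\cup H_1$ (part~\ref{itm:lq-hosp} of Lemma~\ref{lem:G-invariants}). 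Combining these with the no-steep-edge properties (parts~\ref{itm:edges} and \ref{itm:undersub-edges} of Lemma~\ref{lem:G-invariants}) pins the slack $q-p$ so that it is exactly absorbed by the endpoint contributions in items 2 and 3.

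The routine parts are the vote accounting and the invocation of Theorem~\ref{thm:levels}; the main obstacle is the endpoint bookkeeping for items 2 and 3. One must (i) handle the orientation mismatch — a path whose $M$-matched end is a hospital interchanges the hospital/resident roles relative to the shape required by Theorem~\ref{thm:levels}, so the theorem must be applied to an interior sub-walk and the leftover $N$-edge charged by hand; (ii) confirm that edges carrying a $\times$ label never contribute positively on the relevant components, which is precisely where Lemma~\ref{lem:label} and the stability of $M'$ enter; and (iii) check that when an endpoint happens to lie at a higher level, the compensating $+1$ is simply unavailable, so that no net deficit is ever created. Carrying this out uniformly across the three cases is the heart of the proof.
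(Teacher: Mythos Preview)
Your proposal is correct and follows essentially the paper's approach: reduce each item to a bound on $\#(1,1)-\#(-1,-1)$ and discharge it via Theorem~\ref{thm:levels} after pinning the endpoint levels with Lemma~\ref{lem:G-invariants}. One small slip: an undersubscribed hospital endpoint need not be non-lower-quota (it can be a lower-quota hospital with $q^-(h)\le|M(h)|<q^+(h)$), but the second clause of part~\ref{itm:undersub} of Lemma~\ref{lem:G-invariants} still gives $h\notin H_0$, hence $p\ge 1$, which is all that is needed.
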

\begin{proof}
\begin{enumerate}

\item Let $\rho$ be an alternating cycle in $M\oplus N$. Further, let $(r,h)\in M$. Consider 
$\rho'=\rho\setminus \{(r,h)\}$ is an alternating path from $h$ to $r$ which starts and ends at the
same level. Hence the number of $(1,1)$ edges on $\rho'$ is at most the number of $(-1,-1)$ edges
on $\rho'$. The same holds for $\rho$.

\item Let $\rho$ be an alternating path in $M\oplus N$ with exactly one end-point matched in $M$.
Thus $\rho$ has even length, and both its end-points are either hospitals or both are residents.

Consider the first case. So let $\rho=\langle h_0, r_1,h_1,\ldots,r_t,h_t\rangle$ where $(r_i,h_i)\in M$
for all $i$. Thus $|M(h_0)|<|N(h_0)|\leq q^+(h_0)$, and hence $h_0$ is under-subscribed.
Then by part \ref{itm:undersub} of Lemma \ref{lem:G-invariants} and feasibility of $M$, $h_0\notin H_0$. By feasibility of $N$, $h_t\in
H_0\cup H_1$. As $(r_t,h_t)\in M$, $r\in R_0\cup R_1$ by the definition of levels. 
Consider the subpath $\rho'=\rho\setminus \{(r_t,h_t)\}$ i.e. the path obtained by removing the
edge $(r_t,h_t)$ from $\rho$. Applying Theorem \ref{thm:levels} to $\rho'$ with $p\geq 1$ and $q=0$ or $q=1$, we get that the number of $(1,1)$ edges on $\rho'$ is at most the number of $(-1,-1)$ edges
on $\rho'$.

Consider the case when both the end-points of $\rho$ are residents. Thus $\rho=\langle r_0,h_1,r_1,
\ldots, h_t,r_t\rangle$ where $(h_i,r_i)\in M$ for all $i$. Again consider $\rho'=\rho\setminus \{(h_t,r_t)\}$. As $r_0$ is unmatched in $M$, $r_0\in R_0$ by the definition of levels. Applying Theorem \ref{thm:levels} to $\rho'$ with $q=0$, we get that the number of $(1,1)$ edges on $\rho'$ is at most the number of $(-1,-1)$ edges
on $\rho'$.

\item Consider the case when both the end-points of the alternating path $\rho$ are matched in $M$.
Thus one end-point of $\rho$ is a hospital whereas the other end-point is a resident. Let $\rho=\langle r_0,h_0, \ldots,r_t,h_t\rangle$ where $(r_i,h_i)\in M$ for all $i$. Hence $|M(h_t)|>|N(h_t)|\geq q^-(h_t)$ by feasibility of $N$. Therefore $h_t\in H_0\cup H_1$ by part \ref{itm:lq-hosp} of Lemma \ref{lem:G-invariants} which implies that $r_t\in R_0\cup R_1$. Consider the subpath
$\rho'=\rho\setminus \{(r_0,h_0),(r_t,h_t)\}$. Thus $\rho'$ begins at $h_0$ and ends at $r_t$. Applying
Theorem \ref{thm:levels} with $q=1$ and $0\leq p\leq \ell$ gives that the number of $(1,1)$ edges
on $\rho'$, and hence on $\rho$, is at most one more than the number of $(-1,-1)$ edges on $\rho$.
These votes in favor of $N$ are compensated by the end-points $r_0$ and $h_t$ as $r_0$ is
unmatched in $N$ and $|M(h_t)|>|N(h_t)|$. 
\end{enumerate}
This completes the proof of the theorem.
\end{proof}

The following lemma proves that $M$ is a maximum cardinality popular matching in $G$. 
\begin{lemma}\label{lem:maxpop}
For any feasible matching $N$ in $G$ such that $|N|>|M|$, $\Delta(N,M)<\Delta(M,N)$.
\end{lemma}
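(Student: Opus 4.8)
The plan is to run the whole comparison through the decomposition of $M\oplus N$ into alternating paths and cycles from Section~\ref{sec:pop-def}. With that setup one has $\Delta(N,M)-\Delta(M,N)=\sum_{\rho}\bigl(\Delta(M\oplus\rho,M)-\Delta(M,M\oplus\rho)\bigr)$, the sum ranging over the components $\rho$ of $M\oplus N$ (a resident or hospital slot untouched by $M\oplus N$ is indifferent, and the correspondence functions of Section~\ref{sec:pop-def} confine every casting vote to a single component). Theorem~\ref{thm:pop} already tells us that every cycle, every path with exactly one endpoint matched in $M$, and every path with both endpoints matched in $M$ contributes a non-positive term, so the only case left is a component $\rho$ both of whose endpoints are unmatched in $M$, i.e.\ an $M$-augmenting path. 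By Theorem~\ref{thm:feasible}, $M$ is feasible, and since $|N|>|M|$ the number of $M$-augmenting paths in the decomposition strictly exceeds the number of $N$-augmenting ones; in particular at least one $M$-augmenting path occurs. Hence the lemma follows once we prove that \emph{every} $M$-augmenting path $\rho$ satisfies $\Delta(M\oplus\rho,M)<\Delta(M,M\oplus\rho)$.

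So first I would fix the shape of an $M$-augmenting path $\rho$: it has one more $N$-edge than $M$-edge, hence an odd number of edges and an even number of vertices, and both of its endpoint edges are $N$-edges; this forces one endpoint to be a resident $r^{*}$ and the other a hospital $h^{*}$, with $r^{*}$ unmatched in $M$ and $h^{*}$ undersubscribed in $M$. Thus $r^{*}\in R_0$, and by part~\ref{itm:undersub} of Lemma~\ref{lem:G-invariants} we have $h^{*}\notin H_0$, so if $h^{*}\in H_p\cap H_{p+1}$ then $p\ge 1$. Orienting $\rho$ from $h^{*}$ to $r^{*}$ puts it exactly in the form $\langle h_0,r_1,h_1,\dots,r_{t+1}\rangle$ of Theorem~\ref{thm:levels}, with the undersubscribed hospital playing the role of $h_0$ (the one hospital allowed not to be fully matched in $M'$) and the unmatched resident playing the role of $r_{t+1}$. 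Applying Theorem~\ref{thm:levels} with $q=0$ then gives that the number of $(1,1)$-edges on $\rho$ is at most the number of $(-1,-1)$-edges minus $p$, hence at least $p\ge 1$ fewer.

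The final step is to turn this structural inequality into a strict inequality on votes. Each resident on $\rho$ and each hospital slot met by $\rho$ casts a single vote, recorded in one coordinate of the label of an $N$-edge of $\rho$, and $\Delta(M\oplus\rho,M)-\Delta(M,M\oplus\rho)$ equals twice the number of $(1,1)$-edges minus twice the number of $(-1,-1)$-edges, plus a correction coming from the two endpoint edges, whose ``$\times$''-coordinates are forced votes for $N$ (a free slot of $h^{*}$ gets filled; $r^{*}$ becomes matched). The $N$-edge at $r^{*}$ cannot have its hospital coordinate be a vote for $N$ as well, since then the unmatched $r^{*}$ would block $M'$ together with a level-$1$ copy of that hospital; so the correction is $0$ unless the resident at the $h^{*}$-end strictly prefers $h^{*}$ to its $M$-partner, in which case it is at most $2$. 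Combined with the surplus of $p\ge1$ of $(-1,-1)$-edges over $(1,1)$-edges from Theorem~\ref{thm:levels}, this yields $\Delta(M\oplus\rho,M)<\Delta(M,M\oplus\rho)$, strictly, except in the single borderline situation $p=1$ together with that resident strictly preferring $h^{*}$. Handling this last situation is the main obstacle, and I would deal with it by a blocking-pair argument in $G'$ in the spirit of Lemma~\ref{lem:label}: such a resident must in fact want a copy of $h^{*}$ at level $\ge 2$ (otherwise it blocks $M'$), which forces its $M$-partner to a hospital not in $H_0\cup H_1$ and lets one re-apply Theorem~\ref{thm:levels} to the corresponding suffix of $\rho$ with a strictly larger starting level; iterating, an $M$-augmenting path cannot consist solely of such borderline steps, since it must eventually reach $r^{*}\in R_0$ through a hospital of $H_0$, so strictness is recovered. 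Summing the per-component bounds over all of $M\oplus N$ then gives $\Delta(N,M)<\Delta(M,N)$.
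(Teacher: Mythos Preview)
Your global setup is more explicit than the paper's: you correctly note that $\Delta(N,M)-\Delta(M,N)$ decomposes along the components of $M\oplus N$, invoke Theorem~\ref{thm:pop} for all non-$M$-augmenting components, and reduce to showing strict negativity on each $M$-augmenting path. The paper's proof focuses directly on one augmenting path and is somewhat terse about this reduction, so in that respect your write-up is clearer.

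For the augmenting path itself, you and the paper diverge. You apply Theorem~\ref{thm:levels} to the whole path $\rho=\langle h_0,r_1,\ldots,r_{t+1}\rangle$ with $p\ge 1$ and $q=0$, obtaining $\#(1,1)\le\#(-1,-1)-p$. This is legitimate---it is exactly how Theorem~\ref{thm:levels} is applied in Theorem~\ref{thm:pop}(2), where $h_0$ is also undersubscribed---and since the net vote for $N$ along $\rho$ equals $2\#(1,1)-2\#(-1,-1)$, you get at once $\Delta(M\oplus\rho,M)-\Delta(M,M\oplus\rho)\le -2p\le -2<0$. That is already the whole argument; your subsequent ``endpoint-correction'' discussion and the hand-wavy iteration for the ``borderline situation $p=1$'' are unnecessary. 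You seem to treat the forced votes at the two ends as separate $\times$-entries outside the $(1,1)/(-1,-1)$ tally, but in the labelling scheme of Section~\ref{sec:pop-def} those votes are ordinary $+1$ coordinates of the first and last $N$-edges and are already absorbed in Theorem~\ref{thm:levels}. As written, the final paragraph is not a proof: the ``iterating'' step is only sketched, and the termination/strictness claim is asserted rather than shown.

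The paper instead argues structurally that somewhere in the interior there must be an $N$-edge $(h_i,r_{i+1})$ with $h_i\in H_1$ and $r_{i+1}\in R_0$, which is $(-1,-1)$ by Lemma~\ref{lem:label}; it then removes the subpath $r_i,h_i,r_{i+1},h_{i+1}$ and applies Theorem~\ref{thm:levels} to the two remaining pieces (with $q=1,p\ge1$ and $q=0,p=0$ respectively), so each piece is balanced and the removed $(-1,-1)$ edge supplies the strict gap. Both routes rest on the same ingredients; yours is in principle shorter, but you should drop the extra correction analysis and stop right after the single application of Theorem~\ref{thm:levels}.
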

\begin{proof}
Consider $M\oplus N$. There is an alternating path $\rho$ in $M\oplus N$ such that $\rho$
has both the end-points unmatched/undersubscribed in $M$ and the path begins and ends with edges in $N$. Let $\rho=\langle h_0,r_1,h_1,\ldots,r_t,h_t,r_{t+1}\rangle$
where $(r_i,h_i)\in M$ for each $i$. As $h_0$ is under-subscribed in $M$, and $r_{t+1}$ is unmatched in $M$, by the definition of levels and part \ref{itm:undersub} of Lemma \ref{lem:G-invariants}, 
$h_0\in \bigcup_{j=1}^{(\ell-1)}H_j$ and $r_{t+1}\in R_0$. Further, by
part \ref{itm:undersub-edges} of Lemma \ref{lem:G-invariants}, there is no edge from $h_0$ to any $r\in R_0$ and no edge from $r_{t+1}$ to any $h\in 
H_1\cup\ldots\cup H_{\ell-1}$.
The path $\rho$ begins in $\bigcup_{j=1}^{(\ell-1)} H_j$ and has to end in $R_0$, and the only edges to $R_0$ are from vertices in $H_0\cup H_1$. Further, each $r_i$, $i\leq t$ is matched in $M$ and hence the corresponding hospital $h_i$ is at the same level as $r_i$. By part \ref{itm:edges} of Lemma \ref{lem:G-invariants}, if $h_i\in H_2\cup\ldots\cup H_{(\ell-1)}$ then $r_{i+1}\notin R_0$ for any $i$. Therefore there
must be an edge $(h_i,r_{i+1})$ on $\rho$ such that $h_i\in H_1$ and $r_i\in R_0$ and $1\leq i\leq t-1$.
By Lemma \ref{lem:label}, this edge must be labelled $(-1,-1)$. Now consider the two subpaths
$\rho_1=\langle h_0,r_1,h_1,\ldots,r_i\rangle$ and $\rho_2=\langle h_{i+1},r_{i+2},\ldots,h_t,r_{t+1}\rangle$.
These are the subpaths obtained by removing the subpath $r_i, h_i,r_{i+1},h_{i+1}$ from $\rho$. By assumption,
$h_i\in H_1$ and $r_{i+1}\in R_0$, hence $r_i\in R_1$ and $h_{i+1}\in H_0$. Therefore, applying
Theorem \ref{thm:levels} to $\rho_1$ and $\rho_2$ gives that the number of $(1,1)$ edges on $\rho_1$
and $\rho_2$ are at most the number of $(-1,-1)$ edges on them.  Thus $N$ does not get more votes than $M$ on $\rho_1$ or $\rho_2$. Further, $M$ gets two more votes on the $(h_i,r_{i+1})$ edge.
Hence the lemma follows.
\end{proof}

\section{Popular matching amongst maximum cardinality feasible matchings}\label{sec:popmax}
In this section, we modify the reduction in Section~\ref{sec:red} to obtain a matching
that is popular amongst all the maximum cardinality feasible matchings of the HRLQ instance.
The reduction is very similar to the one described in Section~\ref{sec:red} except for the number
of copies of each hospital. The HR instance $G'$ described in Section~\ref{sec:red},
has $\ell=2+\sum_{h\in \HH}q^-(h)$ copies corresponding to each hospital in $G$. 

\subsection{Reduction to HR instance}\label{sec:max-red}

Given HRLQ instance $G=(\RR\cup\HH,E)$, the corresponding
HR instance $G''=(\RR''\cup\HH'',E'')$ is as follows. 
We set $\ell=|\RR|+\sum_{h\in \HH} q^-(h)$.
We start with the vertices in $G''$.

\noindent {\bf \underline {The set $\HH''$:}} For every hospital $h \in \HH$, we have $\ell$ copies of 
$h$ in $\HH''$. The set $\HH''$ and the capacities are as given below.
\begin{eqnarray*}
\HH'' & = & \{h^0,\ldots, h^{\ell-1}\mid h\in \HH\}\\
\textrm{Capacities of $h \in \HH''$: }q^+(h^s) & = & q^+(h), \quad s\in \{0,\ldots, |\RR|-1\}\\
q^+(h^s) & = & q^-(h), \quad s\in \{|\RR|,\ldots,\ell-1\}\\
\end{eqnarray*}
The following observation is immediate.
\begin{invariant}
\label{inv:sum-cap-H2}
For a hospital $h \in \HH$, the sum of capacities of all level copies of $h$ in $G''$ is $q_h = |\RR| \cdot q^+(h) + (\ell - 2) \cdot q^-(h)$.
\end{invariant}

\noindent {\bf \underline{The set $\RR''$:}} The set of residents $\RR''$ consists of the set $\RR$ along
with a set of dummy residents $\mathcal{D}_h$ corresponding to every hospital $h \in \HH$.
The set $\RR''$ and $\mathcal{D}_h$ are as defined below:
\[
\begin{array}{rcl}
\RR'' & = & \RR\cup \left(\bigcup_{h\in \HH} \D_h \right)\textrm{ where } \ \ \ \ \ \ \ \ 
\quad \D_h  = \bigcup_{s\in\{0,\ldots,\ell-1\}}\D^s_h \ \ \ \ \  \  \forall h \in \HH \\
\textrm{where }\D^s_h& = & \{d^s_{h,1}\ldots,d^s_{h,q^+(h)}\}, \quad  \quad s\in\{0,|\RR|-1\}\\
\textrm{and }\D^s_h&=&\{d^s_{h,1}\ldots,d^s_{h,q^-(h)}\}, \quad  \quad {s\in \{|\RR|,\ldots,\ell-2\}}
\end{array}\]
Here $\D_h$ is the set of {\em dummy residents corresponding to $h$} and $\D^s_h$ is the set of {\em level-$s$
dummy residents corresponding to $h$}.
Note that $\D^s_h=\emptyset$ for each $s\in\{|\RR|,\ldots,\ell-1\}$ if $q^-(h)=0$.
\noindent The following observation captures the number of dummy residents corresponding to every hospital $h \in \HH$.
\begin{invariant}
\label{inv:sum-dummy-H2}
For a hospital $h \in \HH$, the total number of dummy residents corresponding $h$ in $\RR''$ is $|\D_h| = |\RR| \cdot q^+(h) + (\ell - 3) \cdot q^-(h)$.
\end{invariant}
\vspace{0.1in}
\noindent {\bf \underline{Preference lists:}}
Recall that the preference list of a resident $r$ in $G$ is denoted by $\langle list_r\rangle$ and that
of a hospital $h$ is $\langle list_h\rangle$.

\vspace{0.1in}
\noindent {\em Hospitals' preference lists:} For $h^s \in \HH''$:
\[
\begin{array}{lcl}
s=0& :& \langle list_h\rangle  \circ \langle  \D^0_h \rangle\\
s\in \{1,\ldots, |\RR|-2\} &:& \langle D^{s-1}_h\rangle \circ \langle list_h\rangle \circ \langle D^s_h\rangle\\
s=|\RR|-1 &:&  \langle d^{|\RR|-2}_{h,k},\ldots, d^{|\RR| -2}_{h,q^+(h)}\rangle \circ \langle list_h\rangle \circ \langle \D^{|\RR| -1}_h\rangle, \qquad k=q^+(h)-q^-(h)+1\\
s \in \{ |\RR|, \ldots, \ell -2\}&:& \langle D^{s-1}_h\rangle \circ \langle list_h\rangle \circ \langle D^s_h\rangle\\
s=\ell-1 &:& \langle \D^{(\ell-1)}_h\rangle\circ\langle list_h\rangle
\end{array}
\]
\vspace{0.1in}
\noindent {\em Residents' preference lists:}
\[
\begin{array}{lcl}
\textrm{For }r\in \RR &:& \langle list_r \rangle^{(\ell-1)} \circ\ldots\circ\langle list_r\rangle^0\\
\textrm{For }h\in \HH,  \ \ \ d^s_{h,i}\in \D_h &:& \\
 s\in \{0,\ldots,\ell-2\}\setminus \{|\RR|-1\} &:& \langle h^s, h^{s+1} \rangle \\ 
s=|\RR|-1, \ \  i\in \{1,\ldots,q^+(h)-q^-(h)\} &:& \langle h^{(|\RR|-1)}\rangle \\ 
s=|\RR|-1, \ \ i\in \{q^+(h)-q^-(h)+1,\ldots,q^+(h)\} &:& 
\langle h^{(|\RR|-1)}, h^{|\RR|}\rangle
\end{array}\]

\noindent The following lemma summarizes properties of a stable matching $M''$ in $G''$. It is an analogue of
Lemma~\ref{lem:SM-invariants} from Section~\ref{sec:red}. The proof is 
similar to the proof of Lemma~\ref{lem:SM-invariants}; we omit it here.
\begin{lemma}\label{lem:SM-max-invariants}
\begin{enumerate}
The following hold for any stable matching $M''$ in $G''$:
\item\label{inv:max-one} For any $h \in \HH$, $M''$ matches at most $q^+(h)$ non-dummy residents across all its 
level copies in $G''$.
\item \label{inv:max-two}The matching $M''$ in $G''$ leaves only the level-$(\ell -1)$ copy of any hospital (if it exists) under-subscribed. 
\item Let $h^s \in \HH''$ be active in $M''$.  Then,
\begin{enumerate}
\item\label{inv:max-four-one} At least one position of $h^{s-1}$ is matched in $M''$ to a dummy resident at level-($s-1$).
\item\label{inv:max-four-two} For $0 \le j \le s-2$, $h^j$ is inactive in $M''$ and all positions of $h^{j}$ are matched to dummy residents of level-($j$).
\item\label{inv:max-four-three} For $s+2 \le j \le \ell-1$, $h^j$ is inactive in $M''$ and all positions of $h^{j}$ are matched to  dummy residents of level-($j-1$).
\end{enumerate} 
\item\label{inv:max-three} For any $h \in \HH$, at most two consecutive level copies $h^s$ and $h^{s+1}$ are active in $M''$.
\item\label{inv:max-steep} A level-$s$ resident $r$ in $M''$ does not have any hospital $h$ in its preference list which is active at level-$(s+2)$ in $M''$.
\end{enumerate}
\end{lemma}

\subsection{Popularity of the matching}
In order to compute a matching that popular amongst all the maximum cardinality feasible matchings,
we execute the following algorithm. Construct the graph $G''$ and compute a stable matching $M''$ 
in $G''$. Because of the invariants on $M''$, there exists a natural map from $M''$ to a matching $M$ in $G$.
Let $M(h) = \RR \cap \left(\bigcup_{s=0}^{\ell-1} M''(h^s)\right)$.
We prove that $M=map(M'')$  is popular amongst all maximum cardinality feasible matchings
in $G$.

\noindent {\bf Division of residents and hospitals into subsets:} As in Section~\ref{sec:pop}, we divide residents and hospitals into subsets based on a stable matching $M''$ in $G''$. Thus $R_i$ is
the set of non-dummy residents matched to a level-$i$ hospital $h^i$ in $M''$, $H_i$ is the set of
hospitals that are active at level-$i$ in $M''$, unmatched residents are in $R_0$. Also, if 
$M(h)=\emptyset$ then $h\in H_{(|\RR|-1)}$ if $q^-(h)=0$ and $h\in H_{\ell-1}$ if $q^-(h)>0$.
The following lemma summarizes the properties of a matching $M$ in $G$ where $M=map(M'')$
and $M''$ is a stable matching in $G''$.
\begin{lemma}\label{lem:max-G-invariants}
Let $M$ be a matching in $G$ such that $M=map(M'')$ and $M''$ is a stable matching in $G''$. 
\begin{enumerate}
\item\label{itm:max-active-hosp} Each hospital is present in at most two sets $H_j,H_{j+1}$ for some $j$. We say that 
$h\in H_j\cap H_{j+1}$.
\item\label{itm:max-edges} If $h\in H_j\cap H_{j+1}$, then there is no edge from $h$ to any $r\in R_i$
where $i\leq j-1$.
\item\label{itm:max-undersub} All the non-lower-quota hospitals that are undersubscribed in $M$ are in $H_{(|\RR|-1)}$. 
\item\label{itm:max-def} All the deficient
lower-quota hospitals from $M$ are in $H_{(\ell-1)}$. 
\item\label{itm:max-undersub-edges}If a non-lower-quota hospital is undersubscribed, it has no edge to any resident in $R_0$.
If a lower-quota hospital is deficient it does not have an edge to any resident in $R_i$ for $i\leq \ell-1$.
Similarly an unmatched resident does not have an edge to any hospital in $H_1\cup\ldots\cup H_{\ell-1}$.
\item\label{itm:max-lq-hosp} Let $h\in\HH$ be such that $|M(h)|>q^-(h)$. Then $h\notin H_{|\RR|}\cup\ldots\cup H_{(\ell-1)}$. 
\end{enumerate}
\end{lemma}

\noindent In Theorem \ref{thm:max-feasible} below, we prove that $M=map(M'')$ is feasible in $G$ if $M''$ is
stable in $G''$. 
\begin{theorem}\label{thm:max-feasible}
If $G$ admits a feasible matching then the map $M$ of any stable matching $M''$ in $G''$ is feasible.
\end{theorem}
\begin{proof}
The proof is analogous to that of Theorem \ref{thm:feasible}. Assume, for the sake of contradiction, that $M$ is not feasible and hence there is a hospital $h$ such that $|M(h)|<q^-(h)$. We need to consider $M\oplus N$ where $N$ is a feasible matching in $G$. As $N$ is feasible, $|N(h)|\geq q^-(h)$ and hence there must be a path $\rho$ in the decomposition of $M\oplus N$ with $h$ as one of its end-points. 
The first case is that the other end-point of $\rho$ is  a hospital $h'$ and hence the last edge of $\rho$ must be a $M$-edge (call this Case~1). In the second case, the other end-point of $\rho$ is a 
a resident $r$, implying last edge of $\rho$ is a $N$-edge (call this Case~2). 

Consider Case~1. As the path ends at $h'$ with an $M$-edge, $|M(h')|>|N(h')|\geq q^-(h')$.
Then, by part \ref{itm:max-lq-hosp} of Lemma~\ref{lem:max-G-invariants}, $h'\in H_0\cup\ldots\cup H_{(|\RR|-1)}$. Also, by part \ref{itm:def} of Lemma~\ref{lem:max-G-invariants}, $h\in H_{(\ell-1)}$. Then,
by a similar argument as in the proof of Theorem~\ref{thm:feasible}, the length of $\rho$ exceeds the sum of lower-quotas of all the hospitals and hence such a path can not exist.

When $\rho$ ends in a resident $r$, and hence the last edge is a $N$-edge, $r$ is unmatched in $M$
and hence must be in $R_0$. A similar argument applies in this case as well, for the sub-path of $\rho$ where we consider the first hospital $h_j$ such that $h_j\in H_{|\RR|}$ and for each $h_i, i<j$, $h_i\in H_{(|\RR|+1)}\cup\ldots\cup H_{(\ell-1)}$. We omit the details.
\end{proof}

We need the following lemma and theorem for proving popularity of $M$. They are analogues of
Lemma~\ref{lem:label} and Theorem~\ref{thm:levels} respectively from Section~\ref{sec:pop}.
\begin{lemma}\label{lem:max-label}
Let $M_1$ be any feasible matching. Let $(r,h)\in M$ and $(r',h)\in M_1$ such that $r'=$\cor$_h(r, M, M_1)$.
Further let $h\in H_j\cap H_{j+1}$ and $r\in R_{j+1}$. Further, let $r'\in R_j$. Then the label on
$(r',h)$ edge is $(-1,-1)$. 
\end{lemma}

Let $\rho$ be a path in $M\oplus M_1$ where $M$ is the map of a stable matching $M''$ in $G''$ and $M_1$ is any feasible matching in $G$. 
Here $\rho$ is constructed according to the decomposition described in Section~\ref{sec:pop-def}. 
The labels are assigned according to the voting described in Section~\ref{sec:pop-def}. 

\begin{theorem}\label{thm:max-levels}
Let $\rho=\langle h_0,r_1,h_1,r_2,h_2,\ldots,h_t, r_{t+1}\rangle$. Moreover, let 
$h_0\in H_p\cap H_{p+1}$ and $r_{t+1}\in R_q$. Then the number of $(1,1)$ edges in $\rho$ is at most
the number of $(-1,-1)$ edges plus $q-p$. Thus $(r_k,h_k)\in M$ for all $k$ and $(h_k,r_{k+1})\in M_1$ with $r_{k+1}=$\cor$_{h_k}(r_k, M, M_1)$.
\end{theorem}

Now we prove that $M$ is a maximum cardinality feasible matching in $G$.
\begin{theorem}\label{thm:max-card}
There is no feasible matching $N$ such that $|N|>|M|$.
\end{theorem}
\begin{proof}
For contradiction, assume that there exists a feasible matching $N$ in $G$ such that $|N|>|M|$. Then, in the decomposition of $M\oplus N$, there must be a path with both end-points unmatched / under-subscribed in $M$.
Let $\rho$ be such a path. One end-point of $\rho$ must be a hospital $h$ and other end-point must
be a resident $r$. Then $|M(h)|<|N(h)|$ and hence $h\in \bigcup_{i=|\RR|-1}^{\ell-1} H_{i}$ by part \ref{itm:max-undersub}
of Lemma \ref{lem:max-G-invariants}. Also, $r\in R_0$ by the way division of $\RR$ into subsets is
defined. Let $\rho=\langle r,h_0,r_0,h_1,r_1,\ldots,h_t,r_t,h\rangle$ where $(h_i,r_i)\in M$ for $0\leq i\leq t$. As an unmatched resident has edges only to hospitals in $H_0$ (cf. part \ref{itm:max-undersub-edges} of Lemma \ref{lem:max-G-invariants}), $h_0\in H_0$ and hence $r_0\in R_0$ since $r_0=M(h_0)$. Now again, by part \ref{itm:max-edges} of Lemma \ref{lem:max-G-invariants}, $r_0$
has no edge to any hospital in $H_2\cup\ldots\cup H_{(\ell-1)}$. Hence $h_1\in H_0\cup H_1$. 
Continuing this argument, $h_i\in \bigcup_{j=0}^i H_j$. But then $h$ has no edge to any resident in $\bigcup_{j=0}^{|\RR|-2}$ and so $r_t\in R_{|\RR|-1}$. Therefore $t\geq |\RR|-1$. But then the number 
of residents on $\rho$ is $|\RR|+1$, which is not possible. Hence there is no augmenting path with respect to $M$ in $G$. 
\end{proof}
In the theorem below, we prove that $M$ is popular amongst all the feasible matchings of maximum cardinality.
\begin{theorem}\label{thm:max-pop}
Let $N$ be
any feasible matching in $G$.
\begin{enumerate}
\item If $\rho$ is an alternating cycle in $M\oplus N$, then $\Delta(M\oplus \rho,M)\leq \Delta(M,M\oplus \rho)$.
\item If $\rho$ is an alternating path in $M\oplus N$ with one end-point matched in $M$,
then $\Delta(M\oplus \rho,M)\leq \Delta(M,M\oplus \rho)$.
\end{enumerate}
\end{theorem}
\begin{proof}
Let $\rho$ be an alternating cycle in $M\oplus N$. Further, let $(r,h)\in \rho\cap M$. Then 
$\rho'=\rho\setminus \{(r,h)\}$ is an alternating path from $h$ to $r$ which starts and ends at the
same level. Hence the number of $(1,1)$ edges on $\rho'$ is at most the number of $(-1,-1)$ edges
on $\rho'$. The same holds for $\rho$.

Let $\rho$ be an alternating path in $M\oplus N$ with exactly one end-point matched in $M$.
Thus $\rho$ has even length, and both its end-points are either hospitals or both are residents.

Consider the first case. So let $\rho=\langle h_0, r_1,h_1,\ldots,r_t,h_t\rangle$ where $(r_i,h_i)\in M$
for all $i$. Also, as $N$ is feasible, $q^+(h)\geq |N(h)|>|M(h)|$. By part \ref{itm:max-undersub} of Lemma \ref{lem:max-G-invariants} and feasibility of $M$, $h_0\in \bigcup_{i=|\RR|-1}^{\ell-1} H_i$. By feasibility of $N$, $h_t\in
\bigcup_{i=0}^{|\RR|-1}H_i$. As $(r_t,h_t)\in M$, $r\in \bigcup_{i=0}^{|\RR|-1}$ by the definition of levels. 
Consider the subpath $\rho'=\rho\setminus \{(r_t,h_t)\}$ i.e. the path obtained by removing the
edge $(r_t,h_t)$ from $\rho$. Applying Theorem \ref{thm:max-levels} to $\rho'$ with $p\geq |\RR|-1$ and $q\leq |\RR|-1$, we get that the number of $(1,1)$ edges on $\rho'$ is at most the number of $(-1,-1)$ edges on $\rho'$.

Consider the case when both the end-points of $\rho$ are residents. Thus $\rho=\langle r_0,h_1,r_1,
\ldots, h_t,r_t\rangle$ where $(h_i,r_i)\in M$ for all $i$. Again consider $\rho'=\rho\setminus \{(h_t,r_t)\}$. As $r_0$ is unmatched in $M$, $r_0\in R_0$ by the definition of levels. Applying Theorem \ref{thm:max-levels} to $\rho'$ with $q=0$, we get that the number of $(1,1)$ edges on $\rho'$ is at most the number of $(-1,-1)$ edges
on $\rho'$.
\end{proof}

\bibliographystyle{abbrv}
\newpage
\bibliography{references}
\end{document}